\newtheorem{thm}{Theorem}
\newtheorem{lem}[thm]{Lemma}
\newtheorem{prop}[thm]{Proposition}
\newtheorem{cor}[thm]{Corollary}
\theoremstyle{definition}
\newtheorem{defi}{Definition}
\newtheorem*{defi*}{Definition}
\newtheorem{exa}{Example}
\begin{document}
\title{Efficient LOCC extraction of quantum information encoded in stabilizer codes}

\author{Koki Shiraishi}
 \email{kokishiraishi@g.ecc.u-tokyo.ac.jp}

\affiliation{
 Department of Physics,  Graduate School of Science, The University of Tokyo, Tokyo, Japan
}
\author{Hayata Yamasaki}
\affiliation{
Department of Physics, Graduate School of Science, The University of Tokyo, Tokyo, Japan
}
\author{Mio Murao}
\affiliation{
Department of Physics, Graduate School of Science, The University of Tokyo, Tokyo, Japan
}

\date{\today}

\begin{abstract}
We construct a protocol for extracting distributed single-qubit quantum information encoded in a stabilizer code of multiple qubits, only by single-qubit local operations and classical communication (LOCC) without global operations or entanglement resources.
This protocol achieves efficient extraction within a polynomial time in terms of the number of physical qubits.
We apply this protocol to a setting of quantum information splitting where a subset of spatially separated parties cooperate by classical communication to extract quantum information shared among all the parties. For this task, our LOCC extraction protocol allows designing hierarchical information access structures among the parties, where the minimum number of parties required to cooperate depends on the location of extracting the shared quantum information.
Extraction of quantum information encoded in the stabilizer code appears widely in distributed quantum information processing, such as quantum secret sharing.
Our results allow such distributed quantum information processing without entanglement costs.

\end{abstract}

\maketitle

\section{Introduction}
\label{sec:1}

Connecting multiple quantum computers by classical and quantum communication channels enables various types of distributed quantum tasks~\cite{QuantumInternet,doi:10.1126/science.aam9288}, such as quantum secret sharing~\cite{cleve1999share,gottesman2000theory,singh2005generalized}.
In general, distributed quantum tasks require global quantum operation between distant quantum computers.
Such global quantum operations are implemented by two-qubit gates across the distant qubits.
To perform global two-qubit gates, one has to implement (or simulate) quantum communication between the qubits in different quantum computers, for example, by coherently transporting physical qubits~\cite{Noiri2022}
or by consuming preshared entanglement resources~\cite{Gottesman1999DQC,PhysRevA.76.062323,PhysRevA.62.052317,PhysRevLett.106.013602,Chou2018}.
However, such quantum communication is costly to realize especially when quantum computers to be connected are further apart from each other.


For the cases that quantum communication should be avoided as much as possible, it is important to understand what kind of quantum tasks are still achievable without quantum communication.
The class of quantum operations implementable without quantum communication is called local operations and classical communication (LOCC)~\cite{PhysRevLett.66.1119,horodecki2009quantum,chitambar2014everything} and played a fundamental role in the development of entanglement theory~\cite{PhysRevLett.83.436,PhysRevA.63.022301,horodecki2009quantum}.
There are some cases where distributed quantum tasks are achievable only by LOCC, such as distinguishing quantum states by LOCC~\cite{PhysRevLett.85.4972,doi:10.1080/09500340903477756} and recovering shared classical information by LOCC~\cite{PhysRevA.91.022330,PhysRevA.95.022320,yang2015quantum}.
These tasks aim to extract \textit{classical information} encoded in a quantum state.


In this paper, we analyze a quantum task to extract single-qubit \textit{quantum information} distributed among multiple qubits to one of the qubits when we introduce parties each of whom holds only one qubit and can perform LOCC on each qubit~\cite{Yamasaki2019,Yamasaki2018}.
In this task, single-qubit quantum information is represented by an unknown single-qubit state $\alpha\ket{0}+\beta\ket{1}$ and distributed among multiple qubits as a quantum state $\alpha\ket{\psi_0}+\beta\ket{\psi_1}$, where $\ket{\psi_0}$ and $\ket{\psi_1}$ are mutually orthogonal quantum states of the multiple qubits distributed to the parties.
Extracting quantum information to a party is defined as a decoding transformation from a quantum state $\alpha\ket{\psi_0}+\beta\ket{\psi_1}$ on the multiple qubits to a single-qubit state $\alpha\ket{0}+\beta\ket{1}$ on a qubit at one of the parties.
For example, recovering quantum information in the quantum secret sharing~\cite{cleve1999share,gottesman2000theory,singh2005generalized} and decoding of the quantum codes~\cite{shor1995scheme,steane1996error,RevModPhys.87.307} can be considered as extracting quantum information.

Quantum information encoded in the generic quantum codes cannot be extracted only by LOCC without entanglement resources in general~\cite{Yamasaki2018,Yamasaki2019,yamasaki2019entanglement}.
However, there are several individual examples where extraction is possible only by LOCC.
If the logical basis $\{\ket{\psi_0},\ket{\psi_1}\}$ is represented by particular graph states, it has been already known how to extract quantum information by LOCC as a part of the protocol of quantum secret sharing~\cite{Markham2008,wang2010hierarchical,wang2010hierarchicalQIS,wang2011multiparty}.
There are also cases where extraction is possible without entanglement resources but it requires two-way LOCC~\cite{yamasaki2019entanglement,yamasaki2019spread}.

We provide a class of cases where LOCC extraction is possible, rather than individual examples
by proving that single-qubit quantum information encoded in the stabilizer codes can always be extracted to one of the parties only by LOCC.
We construct a protocol to extract single-qubit quantum information to an assigned party with one-way LOCC without entanglement resources when quantum information is encoded in stabilizer codes.
Our algorithm to find such a protocol is based on a graph-theoretical representation of the decoding operations using graph states~\cite{Hein2006}. 
This algorithm works efficiently within a polynomial time in terms of the number of qubits, due to the efficiency of transforming the graphs.

We further investigate the minimum number of cooperative parties required for LOCC extraction of quantum information encoded in stabilizer codes.
In general, some of the parties do not have to be cooperative; that is, LOCC involved only by a cooperative subset of the parties can achieve extraction of quantum information, irrespective to local operations performed by the other uncooperative parties.
We find a type of stabilizer codes where the minimum required number of cooperative parties for extraction can be identified.
The number of parties that need to cooperate for extraction depends on the assignment of the party to which the information is extracted.
The difference in the minimum number of parties for extraction implies a hierarchical structure of distributing quantum information.
The larger minimum number of parties indicates quantum information is ``more non-locally distributed'' as it requires the cooperation of more parties. 

This hierarchical structure can be applied to construct a particular type of quantum information splitting (QIS)~\cite{PhysRevA.59.1829,PhysRevA.62.012308,PhysRevA.72.044301,PhysRevLett.98.020503,PhysRevA.74.054303,PhysRevA.78.062333,wang2010hierarchical,wang2011multiparty,wang2010hierarchicalQIS}.
The QIS is a technique for sharing quantum information among multiple parties.
We consider a type of QIS where the parties cannot perform quantum communication, and the quantum information has to be recovered only with LOCC, which we call LOCC-QIS.
In the QIS shown in Refs.~\cite{wang2010hierarchical,wang2010hierarchicalQIS,wang2011multiparty}, the hierarchical structures of authorities to access quantum information are presented.
Our analysis generalizes the hierarchical structure to more complex ones among more parties in the LOCC setting and provides a stepping stone to understanding hierarchical QIS protocols in a unified manner using graphs for the graph states.

The rest of this paper is organized as follows. 
In Sec.~\ref{sec:2}, we introduce definitions of the task of extracting quantum information and present the preliminary materials on quantum information extracting tasks, stabilizer codes, and graph states.
In Sec.~\ref{sec:3}, we present our main result, Theorem~\ref{thm:thm10} for extracting single-qubit quantum information encoded in a stabilizer code on multiple qubits to a specified qubit by LOCC\@.
In Sec.~\ref{sec:4}, we provide applications of the protocol for LOCC-QIS obtained in Theorem~\ref{thm:thm10} to Theorem~\ref{thm:gQSS}.
Our conclusion is given in Sec.~\ref{sec:5}.

\section{settings and Preliminaries}
\label{sec:2}
In this section, we define the task of \textit{extraction of quantum information} and introduce preliminaries. In Sec.~\ref{subsec:2-A} definitions of our task are given. In Sec.~\ref{subsec:2-B} we state useful lemmas of preceding works for the analysis of the task. In Sec.~\ref{subsec:2-C} we summarize a definition of the stabilizer code, by which the quantum information is encoded in our setting; then, we also provide a summary of graph states, which are used in our analysis.

\subsection{Settings}
\label{subsec:2-A}
We introduce settings and notations about multipartite systems.
We consider $n$ parties labeled by an integer $j\in\{1,...,n\}$, and each party has a quantum system represented by a complex Hilbert space.
In this paper, $\mathcal{H}^{(j)}$ denotes the Hilbert space associated with party $j$ and the $n$-partite system shared among the $n$ parties is represented by the tensor product of $\mathcal{H}^{(j)}$, i.e., $\bigotimes_{j=1}^n\mathcal{H}^{(j)}$. 
Each party $j$ only performs local quantum operations on $\mathcal{H}^{(j)}$, and classical communication of the outcome of the measurements on $\mathcal{H}^{(j)}$ between the parties. Quantum operations on the systems held by the other parties and quantum communication between the parties are not allowed.  Such a set of quantum operations is called LOCC (local operations and classical communication)~\cite{chitambar2014everything}.
We consider the cases where $\mathcal{H}^{(j)}$ is a qubit, that is, $\mathcal{H}^{(j)}=\mathbb{C}^2$.
single-qubit quantum information encoded in a $n$-qubit system $\bigotimes_{j=1}^n\mathcal{H}^{(j)}$ is represented as a quantum state of a two-dimensional subspace $\mathcal{H}^{(S)}$ of the $n$-qubit Hilbert space.
We focus on the cases when $\mathcal{H}^{(S)}$ is a stabilizer code space specified by a stabilizer $S$, which will be defined in Sec.~\ref{subsec:2-C}.

Let $\mathcal{H}$ be a finite-dimensional Hilbert space, $\mathcal{B}(\mathcal{H})$ be the set of bounded linear operators on $\mathcal{H}$, and $\mathcal{D}(\mathcal{H})$ be the set of density operators. 
A quantum state is represented by an element in $\mathcal{D}(\mathcal{H})$.
A quantum state $\rho$ in a two-dimensional Hilbert space can be represented by using an orthonormal basis called the computational basis $\{\ket{0},\ket{1}\}$ as $\rho=\sum_{a,b=0,1}\rho_{ab}\ket{a}\bra{b}$.
Let $\{\ket{a}^{(S)}:a=0,1\}$ and $\{\ket{a}^{(j)}:a=0,1\}$ be the bases of $\mathcal{H}^{(S)}$ and $\mathcal{H}^{(j)}$, respectively.  The basis $\{\ket{a}^{(S)}:a=0,1\}$ of the code space is called the logical basis.
Then extraction of quantum information to party $j$ is defined as the task to transform $\rho^{(S)}=\sum_{a,b=0,1}\rho_{ab}\ket{a}\bra{b}^{(S)}$ shared over $n$ parties to $\rho^{(j)}=\sum_{a,b=0,1}\rho_{ab}\ket{a}\bra{b}^{(j)}$ of party $j$ by performing a quantum operation over the multipartite system. 
This quantum operation is represented by a completely positive and trace-preserving (CPTP) map from $\mathcal{B}(\bigotimes_{k=1}^n\mathcal{H}^{(k)})$ to $\mathcal{B}(\mathcal{H}^{(j)})$~\cite{nielsen2002quantum}.
By extraction of quantum information, the quantum information spread over the multipartite system is localized to party $j$, and the quantum information becomes accessible by local operations on $\mathcal{H}^{(j)}$.

Our interest is the condition when extraction of quantum information is achievable by LOCC\@.
In the following, a CPTP map that represents LOCC is called an LOCC map.
Now, we define the task of extracting quantum information to party $j$ by LOCC as follows.
\begin{defi}[LOCC extraction of quantum information]
\label{defi:1}
Given a party $j\in\{1,\ldots,n\}$,
let $U:\bigotimes_{k=1}^n\mathcal{H}^{(k)}\to\mathcal{H}^{(j)}$ be an isometry map which transforms the logical basis $\{\ket{a}^{(S)}:a=0,1\}$ of $\mathcal{H}^{(S)}$ to the computational basis $\{\ket{a}^{(j)}:a=0,1\}$ of $\mathcal{H}^{(j)}$.
LOCC extraction of quantum information to party $j$ is a task represented by an LOCC map $\mathcal{C}^{(S\to j)}$ which satisfies, for any quantum state $\rho^{(S)}\in\mathcal{D}(\mathcal{H}^{(S)})$,

\begin{equation}
\label{eq:2}
    \mathcal{C}^{(S\to j)}(\rho^{(S)})=U\rho^{(S)} U^\dagger\in\mathcal{D}(\mathcal{H}^{(j)}).
\end{equation}

\end{defi}
We give an example of LOCC extraction of single-qubit quantum information encoded in two qubits in Appendix~\ref{app:Ex1}.

\subsection{Preliminaries of extracting tasks}
\label{subsec:2-B}
To complete extraction of quantum information, we have to explicitly construct the LOCC map $\mathcal{C}$ that extracts an arbitrary input state $\rho^{(S)}\in\mathcal{D}(\mathcal{H}^{(S)})$. 
In this section, we describe how to simplify this construction of the LOCC map based on Ref.~\cite{Yamasaki2018}.

By purification of quantum states~\cite{nielsen2002quantum}, any quantum state $\rho\in\mathcal{D}(\mathcal{H})$ is represented by a pure state in a composite system represented by $\mathcal{H}^{(R)}\otimes \mathcal{H}$, where $\mathcal{H}^{(R)}$ represents a reference system and $\dim\mathcal{H}^{(R)}=\dim\mathcal{H}$.
The introduction of the reference system is useful for analyzing extraction of quantum information.
The following lemma shows that extraction of quantum information to party $j$ is equivalent to a task to transform the maximally entangled state between $\mathcal{H}^{(R)}$ and $\mathcal{H}^{(S)}$ to the maximally entangled state between $\mathcal{H}^{(R)}$ and $\mathcal{H}^{(j)}$.
Here, the maximally entangled state between $\mathcal{H}^{(R)}$ and $\mathcal{H}^{(S)}$ is
\begin{equation}
\label{eq:RS}
\ket{\Phi}^{(RS)}\coloneqq(\ket{0}^{(R)}\ket{0}^{(S)}+\ket{1}^{(R)}\ket{1}^{(S)})/\sqrt{2}
\end{equation}
and that between $\mathcal{H}^{(R)}$ and $\mathcal{H}^{(j)}$
\begin{equation}
\label{eq:Rk}
    \ket{\Phi}^{(Rj)}\coloneqq(\ket{0}^{(R)}\ket{0}^{(j)}+\ket{1}^{(R)}\ket{1}^{(j)})/\sqrt{2}.
\end{equation}
Due to the following lemma, it suffices to construct a map given by
\begin{equation}
\mathrm{id}^{(R)}\otimes \mathcal{C}^{(S\to j)}:~\ket{\Phi}\bra{\Phi}^{(RS)} \mapsto\ket{\Phi}\bra{\Phi}^{(Rj)},
\end{equation}
to seek an LOCC map extracting quantum information to party $j$. 
\begin{lem}State transformation equivalent to extraction of quantum information, a corollary of Proposition~3 in Ref. ~\cite{Yamasaki2018}: 
\label{lem:1}
LOCC extraction of quantum information to party $j$ defined as Definition~\ref{defi:1} is achievable if and only if we have an LOCC map $\mathcal{C}^{(S\to j)}$satisfying
\begin{equation}
    \mathrm{id}^{(R)}\otimes \mathcal{C} ^{(S\to j)}\left(\ket{\Phi}\bra{\Phi}^{(RS)}\right)=\ket{\Phi}\bra{\Phi}^{(Rj)},
\end{equation}
where $\mathrm{id}^{(R)}$ denotes the identity map for operators on $\mathcal{H}^{(R)}$.
\end{lem}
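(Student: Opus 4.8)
The plan is to prove both directions of the equivalence, relying on the standard Choi--Jamio\l{}kowski correspondence between CPTP maps and their action on a maximally entangled state, and on the fact that LOCC is closed under composition with local isometries.

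First, the ``only if'' direction is immediate. Suppose $\mathcal{C}^{(S\to j)}$ is an LOCC map satisfying Definition~\ref{defi:1}, i.e. $\mathcal{C}^{(S\to j)}(\rho^{(S)}) = U\rho^{(S)}U^\dagger$ for every $\rho^{(S)}\in\mathcal{D}(\mathcal{H}^{(S)})$. Since $\ket{\Phi}\bra{\Phi}^{(RS)}$ is a state on $\mathcal{H}^{(R)}\otimes\mathcal{H}^{(S)}$ and $\mathrm{id}^{(R)}$ acts trivially, I apply $\mathrm{id}^{(R)}\otimes\mathcal{C}^{(S\to j)}$ and use that $(\mathrm{id}^{(R)}\otimes U)\ket{\Phi}^{(RS)} = \ket{\Phi}^{(Rj)}$, which holds because $U$ sends the logical basis $\{\ket{a}^{(S)}\}$ to the computational basis $\{\ket{a}^{(j)}\}$ and the maximally entangled states are defined with respect to exactly these bases. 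Hence $\mathrm{id}^{(R)}\otimes\mathcal{C}^{(S\to j)}(\ket{\Phi}\bra{\Phi}^{(RS)}) = (\mathrm{id}^{(R)}\otimes U)\ket{\Phi}\bra{\Phi}^{(RS)}(\mathrm{id}^{(R)}\otimes U^\dagger) = \ket{\Phi}\bra{\Phi}^{(Rj)}$.

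For the ``if'' direction, suppose an LOCC map $\mathcal{C}^{(S\to j)}$ satisfies $\mathrm{id}^{(R)}\otimes\mathcal{C}^{(S\to j)}(\ket{\Phi}\bra{\Phi}^{(RS)}) = \ket{\Phi}\bra{\Phi}^{(Rj)}$. I want to upgrade this single-state condition to the full requirement of Definition~\ref{defi:1}. The key is that $\ket{\Phi}^{(RS)}$ is maximally entangled, so the reduced state on $\mathcal{H}^{(R)}$ is full-rank; by the Choi--Jamio\l{}kowski isomorphism a CPTP map from $\mathcal{B}(\mathcal{H}^{(S)})$ to $\mathcal{B}(\mathcal{H}^{(j)})$ is uniquely determined by its Choi state $\mathrm{id}^{(R)}\otimes\mathcal{C}^{(S\to j)}(\ket{\Phi}\bra{\Phi}^{(RS)})$. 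Since the isometry $U$ viewed as the channel $\rho^{(S)}\mapsto U\rho^{(S)}U^\dagger$ has Choi state exactly $\ket{\Phi}\bra{\Phi}^{(Rj)}$ (by the computation in the previous paragraph), the two channels have equal Choi states and are therefore equal as maps: $\mathcal{C}^{(S\to j)}(\rho^{(S)}) = U\rho^{(S)}U^\dagger$ for all $\rho^{(S)}\in\mathcal{D}(\mathcal{H}^{(S)})$, which is Definition~\ref{defi:1}. Since this is stated to be a corollary of Proposition~3 in Ref.~\cite{Yamasaki2018}, I would cite that proposition for the general statement and specialize it to the present two-dimensional code space and target qubit.

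The only subtlety — and the step I would be most careful about — is making sure the reference system plays its role correctly: the equivalence is really a statement about the action of $\mathcal{C}^{(S\to j)}$ on \emph{all} inputs, including halves of entangled states with an external system, and the clean way to see that testing on the single maximally entangled state suffices is precisely the Choi--Jamio\l{}kowski argument, which requires $\dim\mathcal{H}^{(R)} = \dim\mathcal{H}^{(S)} = 2$ and the maximal entanglement of $\ket{\Phi}^{(RS)}$. No LOCC-specific reasoning is needed here beyond noting that the class of maps under consideration is fixed on both sides of the equivalence; the content of the lemma is entirely about reducing the universal-quantifier condition of Definition~\ref{defi:1} to a single-state condition.
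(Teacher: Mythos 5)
Your proof is correct: the ``only if'' direction follows from linearity (agreement of $\mathcal{C}^{(S\to j)}$ with $U\cdot U^\dagger$ on all density operators of $\mathcal{H}^{(S)}$ extends to the off-diagonal operators $\ket{a}\bra{b}^{(S)}$ appearing in $\ket{\Phi}\bra{\Phi}^{(RS)}$), and the ``if'' direction is the standard Choi--Jamio\l{}kowski uniqueness argument. The paper itself gives no proof of this lemma, delegating it to Proposition~3 of the cited reference, and your argument is precisely the standard route that result rests on.
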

We give an example of the state transformation equivalent to the extraction of single-qubit quantum information encoded in two qubits in Appendix~\ref{app:Ex2}.

\subsection{Preliminaries for Stabilizer Code and Graph State}
\label{subsec:2-C}
We introduce the stabilizer code and the graph state in this subsection.
In the following, the Pauli operators and the identity on a qubit are denoted by $X, Y, Z$, and $I$, respectively. Their matrix representations in the basis $\{\ket{0},\ket{1}\}$ are,
\begin{equation}
X=\mqty(\pmat{1}),~Y=\mqty(\pmat{2}),~Z=\mqty(\pmat{3})~\text{and}~I=\mqty(1&0\\0&1).
\end{equation}
The Hadamard gate is denoted by $H$, which satisfies $H\ket{0}=\ket{+}$ and $H\ket{1}=\ket{-}$ where $\ket{\pm}$ is defined by $\ket{\pm} = \frac{1}{\sqrt{2}}(\ket{0} \pm \ket{1})$.

A stabilizer is defined as an abelian subgroup of a Pauli group, where the Pauli group on $n$ qubits is defined as the group generated by Pauli operators on each qubit.
For any stabilizer $S$ on $n$ qubits, there exists an element $\ket{\psi}$ in $\bigotimes_{j=1}^n\mathcal{H}^{(j)}$ which is stabilized by $S$, i.e., $s\ket{\psi}=\ket{\psi}$ for all $s\in S$.
If a stabilizer $S$ is generated by $n-m$ independent operators, the set of elements stabilized by $S$ is a $2^{m}$-dimensional linear subspace of $\bigotimes_{j=1}^n\mathcal{H}^{(j)}$ and called a stabilizer code. 
If the stabilizer $S$ on $n$ qubits is generated by $n$ independent generators, the pure state $\ket{S}$ stabilized by $S$ is uniquely determined since the subspace stabilized by $S$ is one-dimensional.
We call this pure state $\ket{S}$ the stabilizer state stabilized by $S$.

The graph state~\cite{Hein2006} can be used to analyze tasks of extracting quantum information encoded in the stabilizer code.
For a given graph $G=(V, E)$, the graph state $\ket{G}$ is defined as follows assuming $G$ to be a simple graph.
Let $G=(V,E)$ be a graph with $|V|=n$ and qubits $\mathcal{H}^{(j)}~(j=1,\cdots ,n)$ be associated with each vertex $j$ in $V=\{j\}_{j=1}^n$. 
We use the same index $j$ to represent the party, the qubit, and the vertex.
Graph state $\ket{G}$ is a pure state of $n$ qubits represented by
\begin{equation}
    \ket{G}=\prod_{\{j,k\}\in E} CZ^{(jk)}\bigotimes_{l=1}^n\ket{+}^{(l)},
\end{equation}
where 
$CZ^{(jk)}$ is the controlled $Z$ gate on two qubits $j$ and $k$ acting as
\begin{equation}
    CZ^{(jk)}\ket{a}^{(j)}\ket{b}^{(k)}=
    \begin{cases}
-\ket{a}^{(j)}\ket{b}^{(k)}&(\text{if }a=b=1),\\
\ket{a}^{(j)}\ket{b}^{(k)}&(\text{otherwise}).
    \end{cases}
\end{equation}
The graph state $\ket{G}$ can also be defined as the unique pure state of $\bigotimes_{j=1}^n\mathcal{H}^{(j)}$ fixed by the stabilizer generated by $n$ operators
\begin{equation}
\label{eq:graph_state}
    S_j=X^{(j)}\otimes\bigotimes_{k\in N_j}Z^{(k)},~(j=1,\cdots,n),
\end{equation}
where a superscript of a Pauli operator corresponds to the superscript of the Hilbert space to which each Pauli operator acts, and $N_j$ is the set of vertices $k$ adjacent to $j$, i.e., those satisfying $\{k,j\}\in E$.
Therefore, a graph state is a stabilizer state. 
In addition, any stabilizer state can be transformed to a graph state by a local Clifford (LC) operation, which is a local unitary operation transforming a Pauli operator to a Pauli operator~\cite{Hein2006}.
We say two quantum states $\rho$ and $\sigma$ of a multiqubit system are \textit{LC equivalent} if there exists a local Clifford operation $\mathcal{E}$ which satisfies $\mathcal{E}(\rho)=\sigma$.
\begin{prop} LC equivalence between a stabilizer state and a graph state~\cite{Hein2006}:
\label{prop:2}
For any stabilizer state $\ket{S}$ of $n$ qubits, there exists a graph state $\ket{G}$ that is LC equivalent to $\ket{S}$, i.e., $\ket{S}\bra{S} = \mathcal{E}(\ket{G}\bra{G})$ for some LC operation $\mathcal{E}$. This LC operation and the graph state $\ket{G}$ can be calculated in $\order{n^2}$ time.
\end{prop}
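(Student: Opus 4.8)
The plan is to work entirely within the binary symplectic (check-matrix) representation of the stabilizer formalism and to realize the required transformation as an explicit finite sequence of single-qubit Clifford gates — Hadamards, phase gates $P=\sqrt{Z}$, and Pauli corrections — each of which is a local Clifford operation. First I would fix $n$ independent generators of the stabilizer $S$ of $\ket{S}$ and assemble them into an $n\times 2n$ matrix $[\,A_X\mid A_Z\,]$ over $\mathbb{F}_2$ together with a length-$n$ sign vector. Independence of the generators says this matrix has rank $n$, and commutativity of the generators says its rows span an isotropic, hence Lagrangian, subspace $L\subseteq\mathbb{F}_2^{2n}$ for the standard symplectic form. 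The target is the canonical form in which generator $j$ equals $S_j=X^{(j)}\otimes\bigotimes_{k\in N_j}Z^{(k)}$ as in Eq.~\eqref{eq:graph_state} for some simple graph $G$, i.e.\ check matrix $[\,I\mid\Gamma\,]$ with $\Gamma$ symmetric, zero-diagonal, and all signs $+1$; that stabilizer uniquely fixes the graph state $\ket{G}$, so producing such a form while recording the gates used proves the claim.

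The key step is to bring the $X$-block $A_X$ to an invertible matrix by conjugating with a suitable pattern of Hadamards, each Hadamard on qubit $i$ simply swapping columns $i$ and $n+i$. I would prove this is always possible by induction on $d(L):=\dim\!\bigl(L\cap(\{0\}\times\mathbb{F}_2^n)\bigr)$, the dimension of the subspace of ``$Z$-only'' stabilizer elements, which vanishes exactly when $A_X$ is invertible: if $[0\mid w]\in L$ is nonzero, picking $i_0$ with $w_{i_0}=1$ and applying $H_{i_0}$ strictly decreases $d(L)$ — the symplectic orthogonality of every element of $L$ to $[0\mid w]$ is precisely what forbids the appearance of a new $Z$-only element after the swap. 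Once $A_X$ is invertible, left-multiplying the generator matrix by $A_X^{-1}$ — which only replaces generators by products of generators and so does not change $S$, i.e.\ Gaussian elimination over $\mathbb{F}_2$ — gives the form $[\,I\mid B\,]$, and commutativity of the generators then forces $B=B^{\mathsf{T}}$. Applying $P$ to each qubit $j$ with $B_{jj}=1$ clears the diagonal, leaving $[\,I\mid\Gamma\,]$ with $\Gamma$ a symmetric zero-diagonal $\mathbb{F}_2$ matrix, i.e.\ the adjacency matrix of a simple graph $G$; finally, single-qubit Pauli operators fix any residual $-1$ signs so that the generator set becomes exactly $\{S_j\}$ of $\ket{G}$. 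Composing all these single-qubit Cliffords and taking the inverse yields the LC operation $\mathcal{E}$ with $\mathcal{E}(\ket{G}\bra{G})=\ket{S}\bra{S}$. I expect the inductive argument that $A_X$ can be made invertible to be the only genuinely non-routine point; the rest is standard stabilizer bookkeeping.

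For the complexity claim, note that the whole procedure manipulates only the $n\times 2n$ binary check matrix and the sign vector. Choosing the Hadamard pattern and performing the Gaussian elimination that inverts $A_X$ dominate the cost; representing each row of the $\mathbb{F}_2$ matrix as a packed $n$-bit register so that a row addition is a constant number of word operations, the elimination runs in $O(n^2)$ time, and each of the $O(n)$ subsequent single-qubit Hadamard, phase, and Pauli updates touches only $O(n)$ entries, hence also stays within $O(n^2)$. Recording $\mathcal{E}$ as a list of at most $O(n)$ single-qubit gates and reading $G$ off from $\Gamma$ are likewise $O(n^2)$, matching the claimed $\order{n^2}$ bound of Ref.~\cite{Hein2006}.
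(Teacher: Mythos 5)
Your proposal is correct and follows essentially the same route as the source the paper cites (and as the paper itself illustrates concretely in Appendix~\ref{app:Ex3} for the five-qubit code): reduce the binary check matrix to the canonical graph-state form $[\,I\mid\Gamma\,]$ with $\Gamma$ symmetric and zero-diagonal by using Hadamards to make the $X$-block invertible, generator multiplication (row reduction over $\mathbb{F}_2$), phase gates to clear the diagonal, and Paulis to fix signs. The only point worth noting is that your $\order{n^2}$ count presupposes bit-packed word operations and a single elimination pass to determine the Hadamard pattern (rather than recomputing a $Z$-only element after each Hadamard as in your inductive existence argument), but this is consistent with how the bound is obtained in Ref.~\cite{Hein2006}.
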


In LOCC extraction of quantum information, a projective measurement in the eigenbasis of a Pauli operator on a party called a local Pauli measurement (LPM), plays an important role as will be shown in Sec.~\ref{sec:3}.
The LPMs on quantum states of an $n$-qubit stabilizer code can be classically simulated in a polynomial time in terms of $n$. This fact is known as the Gottesman-Knill theorem~\cite{gottesman1998heisenberg, Aaronson2004}.
One could use this property for seeking efficient extraction of quantum information encoded in a stabilizer code, but in this paper, we use the fact that the stabilizer state is LC equivalent to the graph state and that LPMs can be efficiently simulated by representing LPMs as a transformation of the graph such as eliminating vertices and local complementations~\cite{Hein2006}.

The graph obtained by eliminating vertex $j$ from graph $G=(V,E)$ is represented by $G-j=(V',E')$, where
\begin{equation}
    V'=V\setminus\{j\},~E'=E\setminus\{\{j,k\}:\{j,k\}\in E\}.
\end{equation}
The graph obtained by the local complementation on vertex $j$ of the graph $G=(V,E)$ is represented by $\tau_jG=(V,E')$, where
\begin{equation}
\begin{split}
     E'=&(E\setminus\{\{k,l\}:\{k,j\},\{l,j\}\in E,\{k,l\}\in E\})\\
     &\cup \{\{k,l\}:\{k,j\},\{l,j\}\in E,\{k,l\}\notin E\}.
\end{split}
\end{equation}
\begin{figure}
    \centering
    \includegraphics[keepaspectratio, scale=0.11, angle=0]{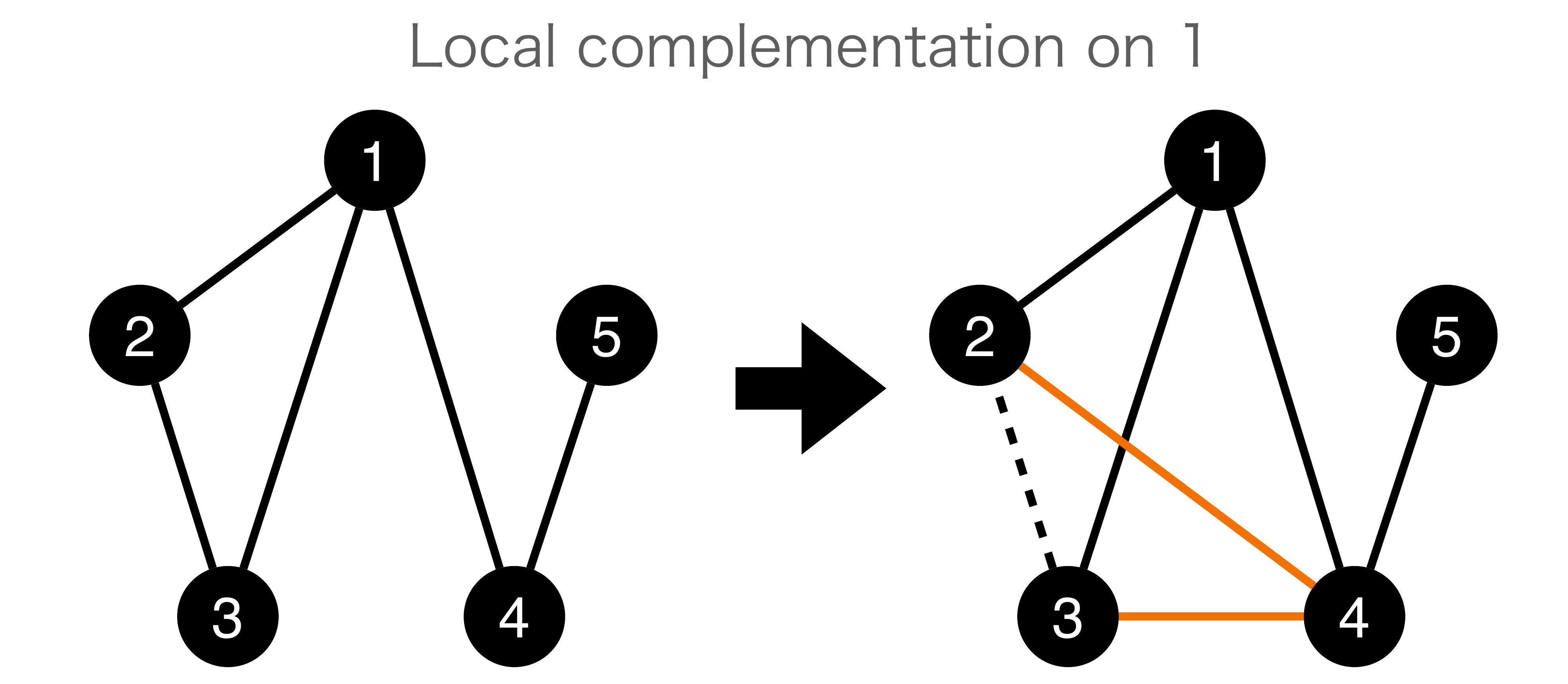}
    \caption{A graphical description of the local complementation. The orange lines represent the edges added after the local complementation on vertex $1$. The dashed line represents the eliminated edge after the local complementation.}
    \label{graph:Lcomp}
\end{figure}We give a graphical description of the local complementation in Fig.~\ref{graph:Lcomp}.

The LPMs on a graph state are represented as follows.
Let $|X , \pm\rangle^{(j)},~|Y, \pm\rangle^{(j)}$ and $|Z, \pm\rangle^{(j)}$ be the eigenstates of $X^{(j)},Y^{(j)}$ and $Z^{(j)}$ with eigenvalues $\pm1$, respectively, and $P_{X, \pm}^{(j)},~P_{Y, \pm}^{(j)}$ and $P_{Z, \pm}^{(j)}$ be the projection operators to the eigenstates $|X, \pm\rangle^{(j)},~|Y, \pm\rangle^{(j)}$ and $|Z, \pm\rangle^{(j)}$, respectively.
In particular, the measurement in $\{\ket{0}=|Z, + \rangle,\ket{1} = |Z, - \rangle  \}$ on a qubit of the graph states is simply represented by eliminating the vertex as shown in Fig.~\ref{graph:Zmeas}, and a measurement in $\{\ket{\pm}=|X, \pm\rangle \}$ on line-topology graph states is represented as shown in Fig.~\ref{graph:Xmeas}. 

\begin{prop}LPM on a graph state, Proposition 7 in Ref.~\cite{Hein2006} :
\label{prop:graph}
A measurement in the eigenbasis of $X, Y, Z$ on the qubit associated with vertex $j$ transforms a graph $G$ to a new graph state $\ket{G'}$ up to local unitaries $U_{j,A,\pm}~(A=X,Y,Z)$. 
Let $N_j$ be the neighborhood of $j$, and $\tau_j$ be the local complementation on vertex $j$.
The state after a measurement in each basis is given by
\begin{align}
    P_{Z, \pm}^{(j)}|G\rangle=&\frac{1}{\sqrt{2}}|Z, \pm\rangle^{(j)} \otimes U_{j,Z, \pm}|G-j\rangle,\\
    P_{Y, \pm}^{(j)}|G\rangle=&\frac{1}{\sqrt{2}}|Y, \pm\rangle^{(j)} \otimes U_{j,Y,\pm}\left|\tau_{j}(G)-j\right\rangle,\\
    \begin{split}
    P_{X, \pm}^{(j)}|G\rangle=&\frac{1}{\sqrt{2}}|X, \pm\rangle^{(j)} \\
    &\otimes U_{j,X, \pm}\left|\tau_{k}\left(\tau_{j} \circ \tau_{k}(G)-j\right)\right\rangle,
    \end{split}
\end{align}
where vertex $k$ can be any one of the elements of $N_j$, and the local unitaries $U_{j,A,\pm}$ are given by
\begin{align}
    U_{j,Z,+}&=1,\\
    U_{j,Z,-}&=Z^{(N_{j})}\coloneqq \bigotimes_{l\in N_j}Z^{(l)}\\
    U_{j,Y,+}&=\bigotimes_{l\in N_j}\sqrt{-i Z^{(l)}},\\
    U_{j,Y,-}&=\bigotimes_{l\in N_j}\sqrt{+i Z^{(l)}},\\
    U_{j,X,+}&=\sqrt{+i Y}^{(k)}\otimes\bigotimes_{l\in N_{j} \setminus\left(N_{k}\cup k\right)} Z^{(l)} ,\\
    U_{j,X,-}&=\sqrt{-i Y}^{(k)}\otimes\bigotimes_{l\in N_{k} \setminus\left(N_{j}\cup j\right)} Z^{(l)}.
\end{align}
If $N_j$ is an empty set, the obtained state after the measurement is always $\ket{G-j}$.
\end{prop}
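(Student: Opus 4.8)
The plan is to argue entirely in the stabilizer formalism. By Eq.~\eqref{eq:graph_state}, $\ket{G}$ is the unique state stabilized by the $n$ generators $S_i=X^{(i)}\otimes\bigotimes_{l\in N_i}Z^{(l)}$, and I will use the standard update rule for a generating set under a projective Pauli measurement $P^{(j)}$ with outcome $\varepsilon=\pm1$: collect the generators anticommuting with $P^{(j)}$; if there are none the state is unchanged; otherwise pick one of them as a pivot $g$, replace $g$ by $\varepsilon P^{(j)}$, replace every other anticommuting generator $h$ by $gh$, and finally multiply any generator still acting nontrivially on qubit $j$ (necessarily by $\pm P^{(j)}$) by $\varepsilon P^{(j)}$. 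After this exactly one generator, $\varepsilon P^{(j)}$, is supported on qubit $j$, so the post-measurement state factorizes as $|P,\varepsilon\rangle^{(j)}$ times an $(n-1)$-qubit state whose stabilizer is read off from the remaining generators; the task reduces to exhibiting, in each case, a local Clifford supported off qubit $j$ that brings that $(n-1)$-qubit stabilizer into canonical graph form, and to reading off the graph. For \emph{$A=Z$} this is immediate: only $S_j$ carries an $X$ on qubit $j$, so it is the unique anticommuting generator; with $g=S_j$ the updated generators are $\varepsilon Z^{(j)}$ and the $S_i$ ($i\neq j$), where $S_i$ for $i\notin N_j$ is already the canonical generator of $G-j$ at $i$, while for $i\in N_j$ multiplying by $\varepsilon Z^{(j)}$ strips the $Z^{(j)}$ factor and gives $\varepsilon X^{(i)}\otimes\bigotimes_{l\in N_i\setminus\{j\}}Z^{(l)}$; absorbing these signs over $i\in N_j$ yields $|Z,\varepsilon\rangle^{(j)}\otimes U\ket{G-j}$ with $U=1$ for $\varepsilon=+1$ and $U=\bigotimes_{l\in N_j}Z^{(l)}=Z^{(N_j)}$ for $\varepsilon=-1$, i.e.\ $U_{j,Z,\pm}$.

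For \emph{$A=Y$}, the generators anticommuting with $Y^{(j)}$ are $S_j$ and the $S_i$ with $i\in N_j$. Taking $g=S_j$ and clearing the $Y^{(j)}$ factor with $\varepsilon Y^{(j)}$, each $i\in N_j$ yields an updated generator proportional to $Y^{(i)}\otimes\bigotimes_{l\in(N_i\triangle N_j)\setminus\{i,j\}}Z^{(l)}$, and a short neighborhood computation shows that $(N_i\triangle N_j)\setminus\{i,j\}$ is exactly the neighborhood of $i$ in $\tau_j(G)-j$; the untouched $S_i$ with $i\notin N_j\cup\{j\}$ are likewise already the canonical generators of $\tau_j(G)-j$ at $i$ (their neighborhoods are unaffected by $\tau_j$ and by deleting $j\notin N_i$). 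Hence the $(n-1)$-qubit stabilizer differs from that of $\tau_j(G)-j$ only by signs and by carrying $Y^{(i)}$ in place of $X^{(i)}$ for each $i\in N_j$; conjugating qubit $i$ by a suitable square root of $Z^{(i)}$ corrects both, and tracking the induced signs against $\varepsilon$ gives $U_{j,Y,+}=\bigotimes_{l\in N_j}\sqrt{-iZ^{(l)}}$ and $U_{j,Y,-}=\bigotimes_{l\in N_j}\sqrt{+iZ^{(l)}}$. (This can also be seen by noting that the local Clifford $U^{\tau}_j$ built from $\sqrt{\pm iX^{(j)}}$ and the $\sqrt{\pm iZ^{(l)}}$, $l\in N_j$, satisfies $U^{\tau}_j\ket{G}=\ket{\tau_j(G)}$ by Ref.~\cite{Hein2006} and conjugates $Y^{(j)}$ to $\pm Z^{(j)}$, reducing the $Y$-rule to the $Z$-rule applied to $\ket{\tau_j(G)}$ followed by undoing $U^{\tau}_j$.)

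For \emph{$A=X$}, $S_j$ commutes with $X^{(j)}$ and the anticommuting generators are exactly the $S_i$ with $i\in N_j$; since $N_j\neq\emptyset$, fix a pivot $k\in N_j$. The cleanest route is to reduce to the $Y$-case just proved: because $k\in N_j$ we have $j\in N_k$, so $U^{\tau}_k$ acts on qubit $j$ as a square root of $Z^{(j)}$ and hence conjugates $X^{(j)}$ to $\pm Y^{(j)}$; thus measuring $X^{(j)}$ on $\ket{G}$ is, up to the local Clifford $U^{\tau}_k$, the same as measuring $Y^{(j)}$ on $\ket{\tau_k(G)}$, which by the $Y$-rule produces the reduced graph $\tau_j(\tau_k(G))-j$, after which restoring $(U^{\tau}_k)^{\dagger}$ reinstates a local complementation about $k$ on that reduced graph, giving $\ket{\tau_k(\tau_j\circ\tau_k(G)-j)}$ up to a local Clifford on $k$ and Pauli corrections on $N_j\cup N_k$. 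Composing all the square-root-of-Pauli factors that land on qubit $k$ yields the single-qubit factor $\sqrt{\pm iY}^{(k)}$, and reconciling the neighborhood of $k$ hard-wired into $U^{\tau}_k$ (namely $N_k$ in $G$) with the neighborhood of $k$ in the reduced graph leaves precisely the residual $Z$-strings on $N_j\setminus(N_k\cup k)$ for outcome $+$ and on $N_k\setminus(N_j\cup j)$ for outcome $-$, reproducing $U_{j,X,\pm}$. Finally, if $N_j=\emptyset$ then $S_j=X^{(j)}$ and $\ket{G}=\ket{+}^{(j)}\otimes\ket{G-j}$ already factorizes, so each of the three measurements leaves the remaining qubits in $\ket{G-j}$ with no back-action, which is the last assertion.

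The step I expect to be the main obstacle is precisely the bookkeeping in the $X$-case: verifying that undoing the pivot conjugation $(U^{\tau}_k)^{\dagger}$ on the already-reduced state coincides with a local complementation $\tau_k$ on the reduced graph up to exactly the stated $\sqrt{\pm iY}^{(k)}$ and $Z$-strings, and not up to any other residual Pauli. This forces one to track how the neighborhood of $k$ evolves under $\tau_k$, then under $\tau_j$, then under deletion of $j$, to compare it with the fixed neighborhood $N_k$ used inside $U^{\tau}_k$, and to do all of this with a single coherent sign/phase convention for the $\sqrt{\pm iZ}$ and $\sqrt{\pm iY}$ unitaries, so that the two measurement outcomes map to the ordered pair $(U_{j,X,+},U_{j,X,-})$ rather than its swap. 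Everything else is the textbook stabilizer-measurement update plus elementary symmetric-difference identities for neighborhoods.
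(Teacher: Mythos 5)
The paper does not prove this proposition: it is imported verbatim as Proposition~7 of Ref.~\cite{Hein2006}, so there is no in-paper argument to compare yours against. Your derivation is the standard one — and in fact essentially the one given in the cited reference: the $Z$-case by the elementary stabilizer-measurement update (only $S_j$ anticommutes, stripping $Z^{(j)}$ from the $S_i$ with $i\in N_j$ and absorbing the outcome sign into $Z^{(N_j)}$); the $Y$-case by pivoting on $S_j$ and checking that $(N_i\triangle N_j)\setminus\{i,j\}$ is the neighborhood of $i$ in $\tau_j(G)-j$, which is correct; and the $X$-case by conjugating with the local-complementation Clifford $U^{\tau}_k$ for a neighbor $k\in N_j$ so as to reduce to the $Y$-rule, then undoing $U^{\tau}_k$, which accounts for the outer $\tau_k$ and the $\sqrt{\pm iY}^{(k)}$ factor. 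The outline is sound and the neighborhood identities you invoke check out; the only part left genuinely schematic is the sign/phase bookkeeping in the $X$-case that pins down the residual $Z$-strings on $N_j\setminus(N_k\cup k)$ versus $N_k\setminus(N_j\cup j)$ and the assignment of $\sqrt{+iY}$ versus $\sqrt{-iY}$ to the two outcomes — you flag this yourself, and it is exactly the part that requires a fixed convention for the square roots; as written it is a correct plan rather than a completed verification, but it is not a gap in the approach.
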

\begin{figure}
    \centering
    \includegraphics[keepaspectratio, scale=0.12, angle=0]{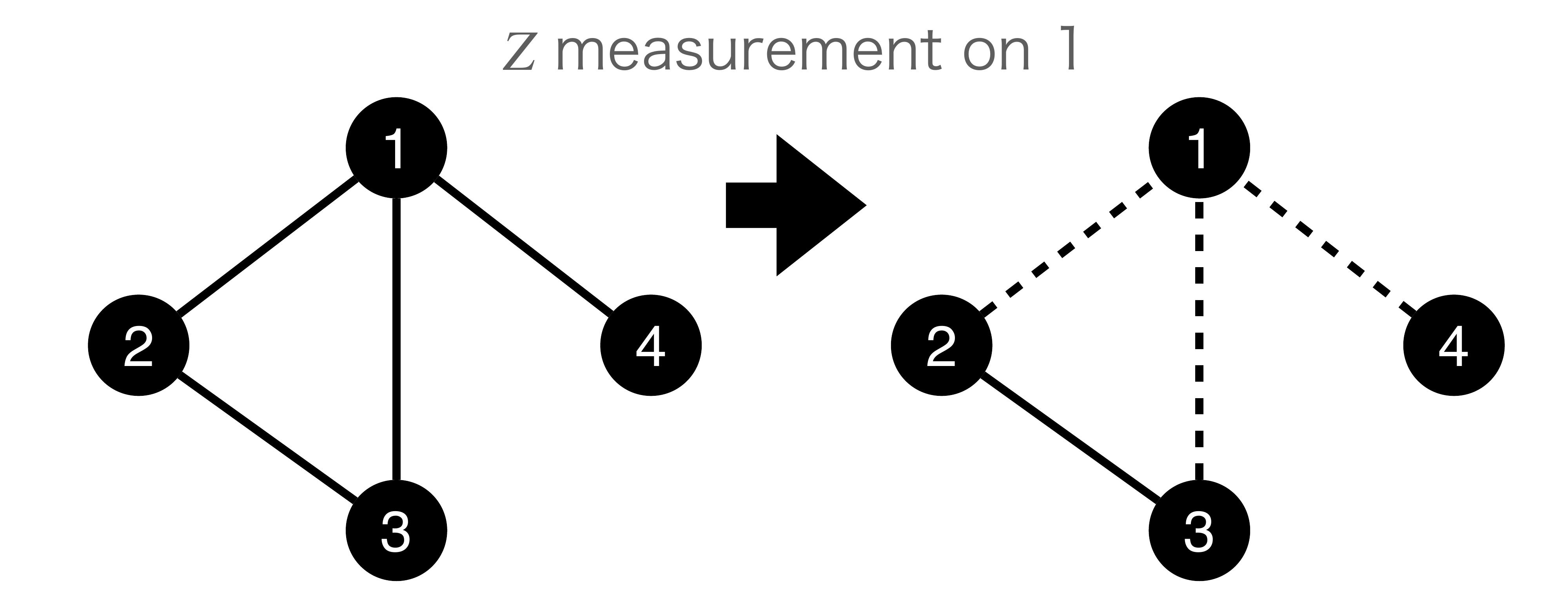}
    \caption{A measurement in $\{\ket{0},\ket{1}\}$ on a qubit on the graph state is represented by the elimination of the vertex corresponding to the measured qubit. The transformation of the graph represents the measurement in $\{\ket{0}^{(1)},\ket{1}^{(1)}\}$ on qubit $1$ and the dashed lines represent eliminated edges.}
    \label{graph:Zmeas}
\end{figure}
\begin{figure}
    \centering
    \includegraphics[keepaspectratio, scale=0.14, angle=0]{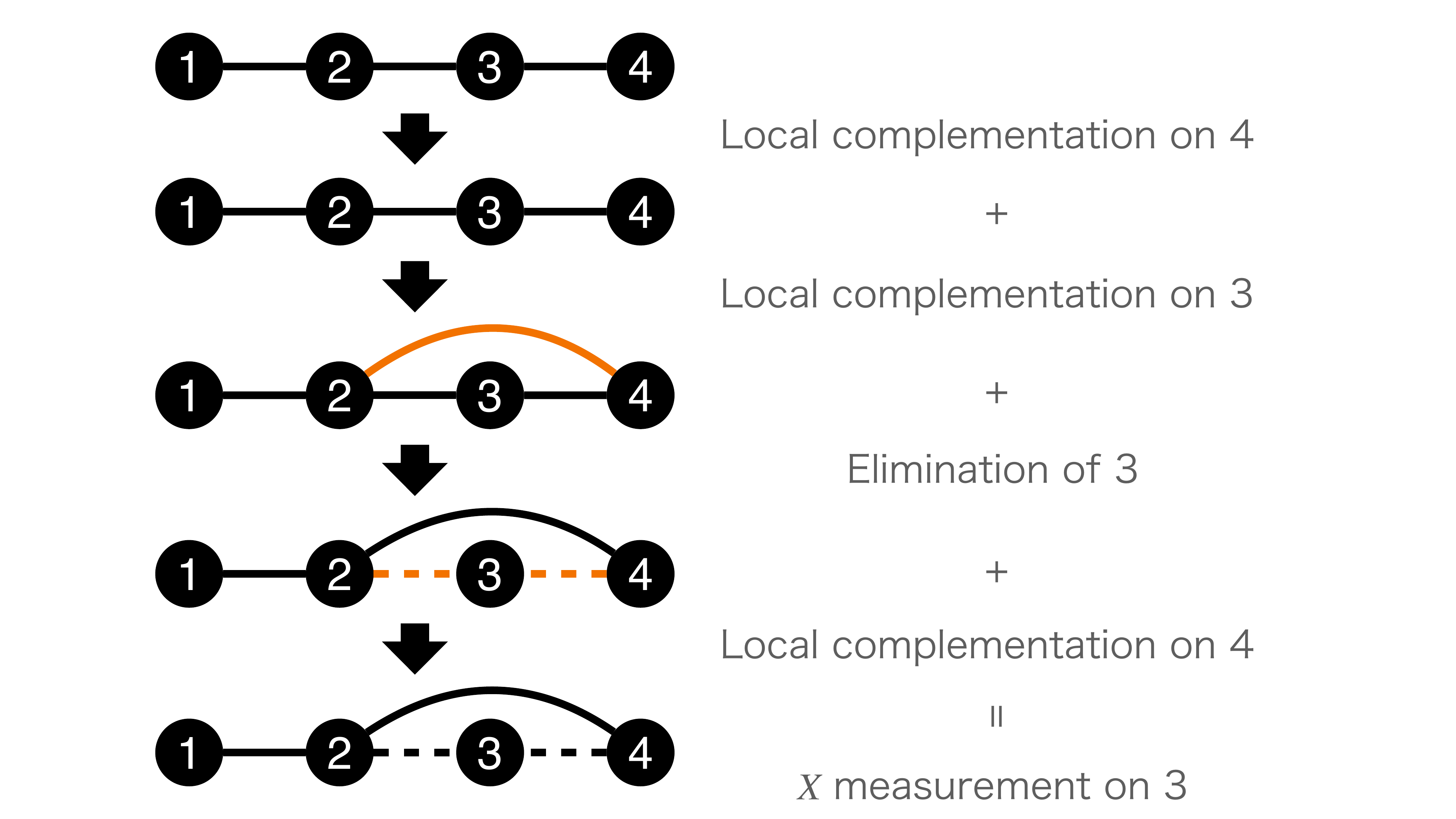}
    \caption{Measurement in $\{\ket{\pm}\}$ on a qubit of a line-topology graph state. These sequential transformations of graphs represent the measurement in $\{\ket{\pm}^{(3)}\}$ on qubit $3$ of the graph state corresponding to the line-topology graph with 4 vertices.
    The last graph at the bottom represents the state after the measurement in $\{\ket{\pm}^{(3)}\}$ on qubit $3$ up to the local Clifford operation.
    According to Proposition~\ref{prop:graph}, we first perform the local complementation on vertex $4$, which does not change the graph. Then, we perform the local complementation on vertex $3$, and the edge represented by the curved orange line is added as shown in the third graph. Finally, we eliminate vertex $3$, perform the local complementation on vertex 4, and obtain the graph in the last line in the figure. The dashed lines represent edges eliminated after eliminating vertex $3$.}
    \label{graph:Xmeas}
\end{figure}

Here, we give a corollary of Proposition~\ref{prop:graph} about a measurement in $\{\ket{\pm}\}$ on a line-topology graph states, which is illustrated in Fig.~\ref{graph:Xmeas} and is useful for describing our main result.
\begin{cor} Measurement in $\{\ket{\pm}\}$ on a qubit of a line-topology graph state:
\label{cor:linegraph}
Let $G=(V,E)$ be a line-topology graph given by $V=\{j\}_{j=1}^n$ and $E=\{\{j,{j+1}\}\}_{j=1}^{n-1}$. 
A measurement in $\ket{\pm}^{(j)}$ on $j$ maps $\ket{G}$ to $\ket{\pm}\otimes U_{j,X,\pm}\ket{G'}$, where $U_{j,X,\pm}$ is the local unitary defined as Proposition~\ref{prop:graph} and $G'=(V\setminus\{j\},(E\cup\{\{j-1,j+1\}\})\setminus\{\{j-1,j\},\{j,j+1\}\})$. 
\end{cor}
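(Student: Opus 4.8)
The plan is to read off the corollary as a specialization of Proposition~\ref{prop:graph} to a line-topology graph. Proposition~\ref{prop:graph} already gives, for a measurement of qubit $j$ in the eigenbasis $\{\ket{X,\pm}^{(j)}\}=\{\ket{\pm}^{(j)}\}$ and for \emph{any} neighbor $k\in N_j$, the (normalized) post-measurement state $\ket{\pm}^{(j)}\otimes U_{j,X,\pm}\ket{\tau_k(\tau_j\circ\tau_k(G)-j)}$ with the explicit local unitary $U_{j,X,\pm}$. Since this is precisely the $U_{j,X,\pm}$ named in the corollary, the whole statement reduces to the purely graph-theoretic claim that, for the line graph $G$ with $E=\{\{i,i+1\}\}_{i=1}^{n-1}$ and a suitable choice of $k\in N_j$, one has $\tau_k(\tau_j\circ\tau_k(G)-j)=G'$.

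So the work is combinatorial. First I would fix $k$: for an interior vertex $2\le j\le n-1$ one has $N_j=\{j-1,j+1\}$, and I would pick $k$ to be the neighbor of $j$ that is nearer an endpoint of the line, so that the first and last local complementations in the formula act on a vertex of small degree and are easy to track; the endpoints $j=1$ and $j=n$, for which $N_j$ is a single vertex and the rerouting edge $\{j-1,j+1\}$ is vacuous, I would handle as separate (simpler) cases. Then I would execute the four operations in order and record their effect on the edge set: (i) $\tau_k(G)$, which on a line only toggles the single non-edge joining $j$ to the other neighbor of $k$ (and is the identity when $k$ is an endpoint); (ii) $\tau_j$ applied to the resulting graph, toggling all edges inside the updated neighborhood of $j$; (iii) deletion of $j$, which erases every edge incident to $j$; and (iv) the final $\tau_k$. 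The assertion to be checked is that the auxiliary edges introduced in (i) and (ii) are exactly cancelled by edges removed in (ii)--(iv), so that the net change from $G$ to the final graph is nothing more than: delete $\{j-1,j\}$ and $\{j,j+1\}$ and insert $\{j-1,j+1\}$ --- i.e.\ the final graph is $G'$.

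The main obstacle is this bookkeeping of edge toggles: at each local complementation one must list the relevant neighborhood, decide for every pair whether the edge is currently present (hence removed) or absent (hence added), and verify that the successive toggles telescope down to the single rerouting edge, while the final graph is still a line on $V\setminus\{j\}$. Once this is done, the $+$ and $-$ outcomes are settled simultaneously, since Proposition~\ref{prop:graph} yields the same graph for both and only $U_{j,X,\pm}$ differs; the boundary cases collapse, by the same argument with $|N_j|=1$, to $\ket{G-j}$ up to the stated local Clifford, which agrees with the vacuous reading of $\{j-1,j+1\}$ in the definition of $G'$. Carrying out the computation for the four-vertex line as in Fig.~\ref{graph:Xmeas} provides a convenient consistency check.
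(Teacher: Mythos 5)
Your reduction of the corollary to the combinatorial identity $\tau_k(\tau_j\circ\tau_k(G)-j)=G'$ is the right framing, and it is essentially how the paper treats this statement (it points to Proposition~\ref{prop:graph} and the worked picture in Fig.~\ref{graph:Xmeas}). The gap is in the step you defer to ``bookkeeping'': the telescoping cancellation you assert does \emph{not} happen when $j$ is a genuinely interior vertex, i.e.\ when both neighbours of $j$ themselves have degree $2$. Take $n=5$, $j=3$, $k=4$: $\tau_4$ adds $\{3,5\}$, so $N_3$ becomes $\{2,4,5\}$; $\tau_3$ then adds $\{2,4\}$ and $\{2,5\}$ and deletes $\{4,5\}$; deleting vertex $3$ and applying the final (now trivial) $\tau_4$ leaves the edge set $\{\{1,2\},\{2,4\},\{2,5\}\}$ --- a star, not the path on $1,2,4,5$. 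This is not a slip that a better choice of $k$ repairs: a direct computation gives the post-measurement state
\begin{equation}
\frac{1}{\sqrt{2}}\left(\ket{+}^{(1)}\ket{00}^{(24)}\ket{+}^{(5)}+\ket{-}^{(1)}\ket{11}^{(24)}\ket{-}^{(5)}\right),
\end{equation}
a GHZ-type state, which is not even LC-equivalent to the four-vertex path (their cut-ranks across the bipartition $\{1,4\}$ versus $\{2,5\}$ differ), so no local unitary in place of $U_{j,X,\pm}$ can rescue the stated conclusion there.

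The identity $\tau_k(\tau_j\circ\tau_k(G)-j)=G'$ does hold precisely when $k$ can be chosen as a degree-one endpoint of the line adjacent to $j$ (so $j=2$ or $j=n-1$): then the first and last $\tau_k$ are trivial, $\tau_j$ toggles only the single pair $\{j-1,j+1\}$, and deleting $j$ yields the contracted path. That is also the only situation in which the paper uses the corollary --- in Algorithm~\ref{alg2} and the proof of Theorem~\ref{thm:thm10}, the $X$-measured vertex is always adjacent to the terminal vertex $v_1=j$ of the already-contracted path. So your proof should fix $k$ to be that endpoint (not merely ``the neighbour nearer an endpoint'') and the corollary should be read with that restriction. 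Your boundary cases also need care: for $j=1$ with $n\ge 3$, an $X$-measurement does not give $\ket{G-j}$ (the remark in Proposition~\ref{prop:graph} applies only to $N_j=\emptyset$); it additionally projects vertex $2$ out of the remaining graph.
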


\section{LOCC extraction of stabilizer codes}
\label{sec:3}
In this section, we show the necessary and sufficient condition for extracting single-qubit quantum information encoded in a stabilizer code to a fixed party $j$ by LOCC\@.
This condition is that the code space $\mathcal{H}^{(S)}$ of the stabilizer code \textit{cannot} be expressed as
\begin{equation}
   \label{eq:condi} \mathcal{H}^{(S)}=\mathcal{H}^{(\bar{A})}\otimes\ket{\psi}^{(A)},
\end{equation}
where $A$ is a set of parties including $j$, $\bar{A}$ is the complement of $A$, $\mathcal{H}^{(\bar{A})}$ is a subspace of $\bigotimes_{k\in \bar{A}}\mathcal{H}^{(k)}$, and $\ket{\psi}^{(A)}$ is a state of $\bigotimes_{k\in {A}}\mathcal{H}^{(k)}$.
In this case, the quantum state of qubit $j$ is independent of the encoded quantum information. Therefore, the quantum information is not distributed to party $j$ at all.
It is interesting to note that quantum information shared among multiple qubits using any stabilizer code can always be extracted to some party by LOCC. This is in contrast to general cases (quantum information shared using a non-stabilizer code), as there are known examples in which extraction is not possible by LOCC in general~\cite{Yamasaki2019} and may require two-way LOCC~\cite{yamasaki2019spread}.
For stabilizer codes, there always exists a party included in $\bar{A}$, which satisfies the necessary and sufficient condition for LOCC extraction.  Thus it is always possible to extract quantum information encoded in the stabilizer code using LOCC to some party.

There are two key elements for the proof of the sufficient and necessary condition.
First, as shown in Lemma~\ref{lem:1}, to prove that arbitrary quantum information can be extracted, we will show that the maximum entangled state $\ket{\Phi}^{(RS)}$ of the reference system and code space can be mapped to the maximum entangled state $\ket{\Phi}^{(Rj)}$ of the reference system and qubit $j$ by LOCC without any operations on the reference system.
Second, using the fact that the code space is a stabilizer code, we will regard $\ket{\Phi}^{(RS)}$ as a graph state up to the LC equivalence; then, the measurement in the eigenbasis of the Pauli operators and the LC operations associated with the outcomes of the measurement can be regarded as a transformation of the corresponding graphs.

\begin{thm} 
\label{thm:thm10}
Let $S$ be a given stabilizer on $n$ qubits, $\bigotimes_{k=1}^{n}\mathcal{H}^{(k)}$, generated by $n-1$ independent generators and $\mathcal{H}^{(S)}$ be a stabilizer code stabilized by $S$.
Given a fixed party $j$, there exists an LOCC map which performs extraction of single-qubit quantum information encoded in $\mathcal{H}^{(S)}$ to party $j$, if and only if there exists no subset $A\subset\{1,\cdots,n\}$ which includes $j$ such that each state $\ket{a}^{(S)}~(a=0,1)$ in the logical orthonormal basis $\{\ket{a}^{(S)}:a=0,1\}$ of $\mathcal{H}^{(S)}$ is represented by
\begin{equation}
\label{eq:thm8}
    \ket{a}^{(S)}=\ket{a}^{(\bar{A})}\otimes\ket{\psi}^{(A)},
\end{equation}
where $\{\ket{a}^{(\bar{A})}\}$ is a set of mutually orthogonal states of $\bigotimes_{k\notin A}\mathcal{H}^{(k)}$, and $\ket{\psi}^{(A)}$ is a fixed pure state of $\bigotimes_{k\in A}\mathcal{H}^{(k)}$.
\end{thm}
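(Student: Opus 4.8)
The plan is to prove both directions via the graph-state picture set up in Sec.~\ref{subsec:2-C}, reducing (by Lemma~\ref{lem:1}) the task to transforming $\ket{\Phi}^{(RS)}$ into $\ket{\Phi}^{(Rj)}$ by LOCC with no operation on the reference system $R$. By Proposition~\ref{prop:2}, the $(n{+}1)$-qubit stabilizer state $\ket{\Phi}^{(RS)}$ (on the reference qubit $R$ plus the $n$ physical qubits) is LC-equivalent to some graph state $\ket{G}$, and this $G$ together with the LC operation is computable in $\order{n^2}$ time. Since the LC operation is local, it can be absorbed into each party's local operation, so it suffices to transform $\ket{G}$ into $\ket{\Phi}^{(Rj)}$ (up to local unitaries) by LOCC acting on the $n$ physical qubits only.

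\emph{Sufficiency (the main construction).} First I would analyze the structure of $G$. Because $\ket{\Phi}^{(RS)}$ is maximally entangled between $R$ and the block $S$, the reference vertex $R$ is connected in $G$ to at least one physical vertex; more importantly, I claim that the "separability" condition in the theorem — the nonexistence of a bipartition $A\ni j$, $\bar A$ with $\ket{a}^{(S)}=\ket{a}^{(\bar A)}\otimes\ket{\psi}^{(A)}$ — translates, on the graph $G$, into the statement that vertex $j$ lies in the same connected component as $R$ when one considers the subgraph relevant to the encoded information; equivalently, there is a path in $G$ from $R$ to $j$ that cannot be "disconnected" by the trivial product structure. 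Given such a path $R = v_0, v_1, \dots, v_\ell = j$, the idea is to shorten it step by step: using Proposition~\ref{prop:graph} (or its line-graph Corollary~\ref{cor:linegraph}), a measurement of $v_1$ in the $X$-eigenbasis, followed by the outcome-dependent local Clifford correction, removes $v_1$ and "welds" the path so that $R$ becomes directly adjacent to $v_2$ — after which the neighbours not on the path can be cleaned up by $Z$-basis measurements (vertex deletions) and the corresponding $Z$ corrections on the remaining qubits. Iterating $\ell-1$ times one is left with $R$ adjacent only to $j$, i.e. the graph state of a single edge, which is $\ket{\Phi}^{(Rj)}$ up to local Cliffords (Hadamard on $j$). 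Each step is a one-way LPM plus a local Clifford conditioned on classically-communicated outcomes, so the whole protocol is one-way LOCC, and since every graph update is a local complementation / vertex deletion computable in $\order{n^2}$, and there are $\order{n}$ steps, the total runtime is polynomial in $n$.

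\emph{Necessity.} For the converse I would argue the contrapositive: if such a bipartition $A\ni j$ exists, then the reduced state on the $A$-block is the fixed pure state $\ket{\psi}^{(A)}$ independent of $(\rho_{ab})$, so in particular the reduced state of qubit $j$ carries none of the encoded information; no CPTP map from the $n$ qubits to qubit $j$ that is required to output $U\rho^{(S)}U^\dagger$ can exist, let alone an LOCC one. (This is essentially the remark in the text preceding the theorem; one makes it rigorous by noting that any extraction map, restricted to inputs, must act as a fixed channel on $\ket{\psi}^{(A)}$ and hence cannot depend on $a$, contradicting that its output distinguishes $\ket{0}^{(j)}$ from $\ket{1}^{(j)}$.)

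\textbf{The main obstacle} I anticipate is the first claim of the sufficiency argument: pinning down the precise graph-theoretic characterization of the negation of \eqref{eq:thm8} on the representative graph $G$, and showing it is exactly the condition that guarantees the path-shortening procedure can be carried out all the way to $j$ without getting stuck (e.g. without $j$ being forced into an isolated vertex or into the "product" part after some measurement). Handling the local-Clifford corrections carefully — so that they remain local and do not re-entangle the reference qubit with qubits already removed — is the technical bookkeeping that must be done right; the line-graph corollary \ref{cor:linegraph} is designed to make exactly this step clean once the graph has been massaged into a path from $R$ to $j$ with pendant vertices, so a preliminary normalization step (reducing $G$, via local complementations and $Z$-measurements on vertices irrelevant to the $R$–$j$ information flow, to such a shape) will likely be needed before the iteration begins.
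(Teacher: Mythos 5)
Your sufficiency direction follows essentially the same route as the paper: reduce via Lemma~\ref{lem:1} and Proposition~\ref{prop:2} to a graph state, find a path from $R$ to $j$, delete the off-path neighbours by $Z$-basis measurements, contract the path by $X$-basis measurements using Proposition~\ref{prop:graph}/Corollary~\ref{cor:linegraph}, and push all outcome-dependent Clifford corrections (including the one that would land on the untouchable reference, via the transpose trick through the maximally entangled state) onto party $j$. The ``main obstacle'' you flag --- that the negation of Eq.~\eqref{eq:thm8} is equivalent to $R$ and $j$ being connected in $G$ --- is indeed the pivot of the argument, and the paper handles its impossibility half not by a structural graph lemma but by importing Corollary~6 of Ref.~\cite{Yamasaki2019} (Lemma~\ref{lem:cor6} in the appendix).

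The genuine gap is in your necessity direction. You claim that if a bipartition $A\ni j$ with $\ket{a}^{(S)}=\ket{a}^{(\bar A)}\otimes\ket{\psi}^{(A)}$ exists, then \emph{no CPTP map at all} from the $n$ qubits to qubit $j$ can perform the extraction. That is false: a global decoding circuit (the inverse encoder followed by a SWAP into qubit $j$) always exists for any stabilizer code, regardless of where the information sits. The obstruction is specifically an LOCC obstruction. Your proposed rigorization --- ``the output cannot depend on $a$'' --- also fails, because under LOCC the other parties can measure in the logical basis and classically tell party $j$ the value of $a$, so the \emph{diagonal} of $\rho^{(S)}$ is perfectly transferable; what cannot be transferred without quantum communication is the coherence $\rho_{01}$. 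To make this precise you need the reference-system formulation of Lemma~\ref{lem:1} together with an entanglement monotone: across the cut separating the parties in $A$ from those in $\bar A$ (with $R$ grouped on the $\bar A$ side), the initial state $\ket{\Phi}^{(RS)}=\ket{\Phi}^{(R\bar A)}\otimes\ket{\psi}^{(A)}$ has zero distillable entanglement, while the target $\ket{\Phi}^{(Rj)}$ with $j\in A$ has one ebit, and LOCC among the $n$ parties cannot increase entanglement across any such cut. This is exactly the content of Lemma~\ref{lem:cor6}, which your argument needs but does not supply.
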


\begin{proof}
   The LOCC extraction map for a given stabilizer $S$ can be obtained according to Algorithm~\ref{alg2}. See Appendix~\ref{app:thm5} for the details.
\end{proof}

Theorem~\ref{thm:thm10} is proved by reducing extraction of quantum information to the problem of considering a graph corresponding to the graph state LC equivalent to $\ket{\Phi}^{(RS)}$ and transforming this graph into a connected two-vertex graph representing the reference system and party $j$.
The transformation of a graph is composed of (i) eliminating the vertices not included in the path from vertex $R$ to vertex $j$, and (ii) the local complementations on the remaining vertices except for vertex $R$ and vertex $j$.
This graph transformation is implemented by LPMs and LC operations as shown in Algorithm~\ref{alg2}.
We note that the parties who need to cooperate in Algorithm~\ref{alg2} are only those who are in the path or adjacent to the path; therefore, not all parties need to cooperate on the LOCC extraction of quantum information.

\begin{figure*}
\begin{minipage}{\linewidth}
\begin{algorithm}[H]                      
\caption{An algorithm for LOCC extraction of quantum information to a fixed party}
\label{alg2}                
\begin{algorithmic}[1]               
\Require $\text{$S$: Stabilizer of the stabilizer code, $j$: the party to which information is extracted}$, $\rho^{(S)}\in\mathcal{D}(\mathcal{H}^{(S)})$: the input state in Definition~\ref{defi:1}
\State Calculate logical $X$ and $Z$ operators so as to obtain a classical description of $\ket{\Phi}^{(RS)}$ in terms of the stabilizer state.
\Comment{This can be done by using the standard form of the stabilizer code~\cite{gottesman1997stabilizer}. This can be calculated in $\order{n^3}$ time.}
\State Find a graph state $\ket{G}$ which is LC equivalent to $\ket{\Phi}^{(RS)}$ defined in Lemma~\ref{lem:1} and $\{ U^{(k)}_k \}$ satisfying $\ket{G}=\bigotimes_{k=R,1,\cdots,n}U^{(k)}_k\ket{\Phi}^{(RS)}$.
\Comment{The algorithm for finding a graph $G$ from a stabilizer state is given in Ref.~\cite{PhysRevA.69.022316}.}
\State Find path $P=(V_P,E_P)$ from vertex $R$ to vertex $j$ in graph $G$.
(If such a path does not exist, LOCC extraction of quantum information to $j$ is impossible.)
\Comment{This can be done in $\order{n+|E|}$ time by the breadth-first search~\cite{10009005306}.}
\State $\text{Each party $k$ who is in $V_P$ or adjacent to some vertex in $V_P$ applies $U_k^{(k)}$ to their local qubit.}$
\State $\text{Initialize $m_l=0$ for $l=1,2,\cdots,|V_P|$.}$
\Comment{These correspond to the vertices $v_l$ of $P$, where $v_1=j$, $v_{|V_P|}=R$ and $\{v_{l},v_{l+1}\}$'s are the edges of $P$.}
\For {$\text{$v$ in the set of the vertices adjacent to a vertex of $V_P$ but not included in $V_P$}$}
\State$\text{Perform measurement in $\{\ket{0}^{(v)},\ket{1}^{(v)}\}$ on party $v$.}$
\If {$\text{the outcome is $\ket{1}^{(v)}$}$}
\For {$l=1,2,\cdots,|V_P|$}
\If {$\text{$v$ is adjacent to $v_l$}$}
\State $m_l \gets m_l+1$
\EndIf
\EndFor
\EndIf
\EndFor
\For{$l=1,\cdots,|V_P|-1$}
\State $\text{The party $v_l$ performs $Z^{m_l}$.}$
\EndFor
\For{$l=2,\cdots,|V_P|-2$}
\State $\text{Party $v_l$ performs measurement in $\{\ket{\pm}^{(v_l)}\}$}$
\If {$\text{the outcome is $\ket{+}^{(v_l)}$}$}
\State Party $j$ performs $(iY^{(j)})^{-1/2}$.
\State Party $v_{l+1}$ performs $Z^{(v_{l+1})}$.
\Else \Comment{$\text{The outcome is $\ket{-}^{(v_l)}$}$}
\State Party $j$ performs $(-iY^{(j)})^{-1/2}$.
\EndIf
\EndFor
\State $\text{Party $v_{|V_P|-1}$ performs measurement in $\{\ket{\pm}^{(v_{|V_P|-1})}\}$}$
\If {$\text{the outcome is $\ket{+}^{(v_{|V_P|-1})}$}$}
\State $\text{Party $j$ performs $HZ^{m_{|V_P|} +1}(U_R^\dagger)^{T}(iY^{(j)})^{-1/2}$ on qubit $j$.}$
\Else \Comment{$\text{The outcome is $\ket{-}$}$}
\State $\text{Party $j$ performs $HZ^{m_{|V_P|}}(U_R^\dagger)^{T}(-iY^{(j)})^{-1/2}$ on qubit $j$.}$
\EndIf
\end{algorithmic}
\end{algorithm}
\end{minipage}
\end{figure*}

To see how Algorithm~\ref{alg2} works, we give an example of the five-qubit code~\cite{cleve1999share,gottesman2000theory} and  graphically explain the process of extraction of quantum information using graphs.

\begin{exa}[Five-qubit code]
\label{exa:5qcode}
We extract quantum information encoded in the five-qubit code~\cite{divincenzo1996fault}, which is known to be able to protect against an arbitrary single-qubit error.

Let us consider the task of extracting quantum information to party $1$ according to Algorithm~\ref{alg2}.
The stabilizer $S$ of the five-qubit code is generated by $4$ generators
\begin{equation}
\begin{split}
    X^{(1)}Z^{(2)}Z^{(3)}X^{(4)}I^{(5)},\\
    I^{(1)}X^{(2)}Z^{(3)}Z^{(4)}X^{(5)},\\
    X^{(1)}I^{(2)}X^{(3)}Z^{(4)}Z^{(5)},\\
    Z^{(1)}X^{(2)}I^{(3)}X^{(4)}Z^{(5)},
\end{split}
\end{equation}
where the tensor product symbol $\otimes$ is omitted for brevity.

The graph $G$ corresponding to the five-qubit code is given in Fig.~\ref{graph:5qubit}, which can be calculated according to Appendix~\ref{app:Ex3}.
The maximally entangled state between the stabilizer code and the reference is LC equivalent to $\ket{G}$ as
\begin{equation}
    \ket{\Phi}^{(RS)}=H^{(R)}H^{(4)}H^{(5)}\ket{G},
\end{equation}
where
\begin{equation}
\begin{split}
    \ket{0}^{(S)}=&\frac{1}{4}(\ket{00000}+\ket{11000}+\ket{01100}+\ket{00110}+\ket{00011}\\
    &+\ket{10001}-\ket{10100}-\ket{01010}-\ket{00101}-\ket{10010}\\
    &-\ket{01001}-\ket{11110}-\ket{01111}-\ket{10111}-\ket{11011}\\
    &-\ket{11101})\\
    \ket{1}^{(S)}=&\frac{1}{4}(\ket{11111}+\ket{00111}+\ket{10011}+\ket{11001}+\ket{11100}\\
    &+\ket{01110}-\ket{01011}-\ket{10101}-\ket{11010}-\ket{01101}\\
    &-\ket{10110}-\ket{00001}-\ket{10000}-\ket{01000}-\ket{00100}\\
    &-\ket{00010}).\\
\end{split}
\end{equation}

Then, the parties perform the sequence of operations to extract quantum information to party 1, according to Algorithm~\ref{alg2}.
This process is schematically illustrated in Fig.~\ref{graph:5qubit}, \ref{graph:5qubitb}, and \ref{graph:5qubitc}.

\begin{figure}
    \centering
    \includegraphics[keepaspectratio, scale=0.18, angle=0]{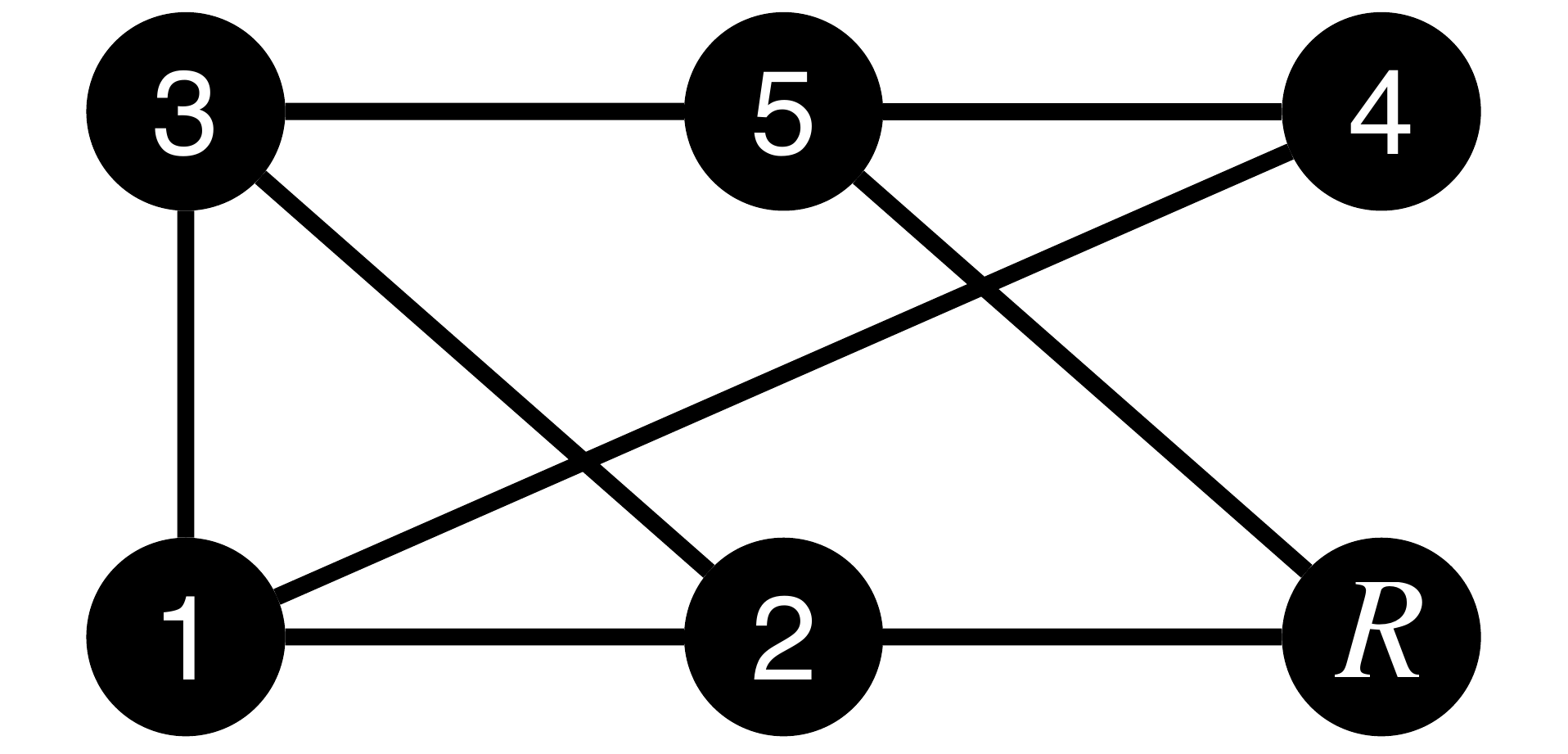}
    \caption{The graph corresponds to the graph state which is LC equivalent to the stabilizer state $\ket{\Phi}^{(RS)}$ of the five-qubit code.}
    \label{graph:5qubit}
\end{figure}
\begin{figure}
    \centering
    \includegraphics[keepaspectratio, scale=0.18, angle=0]{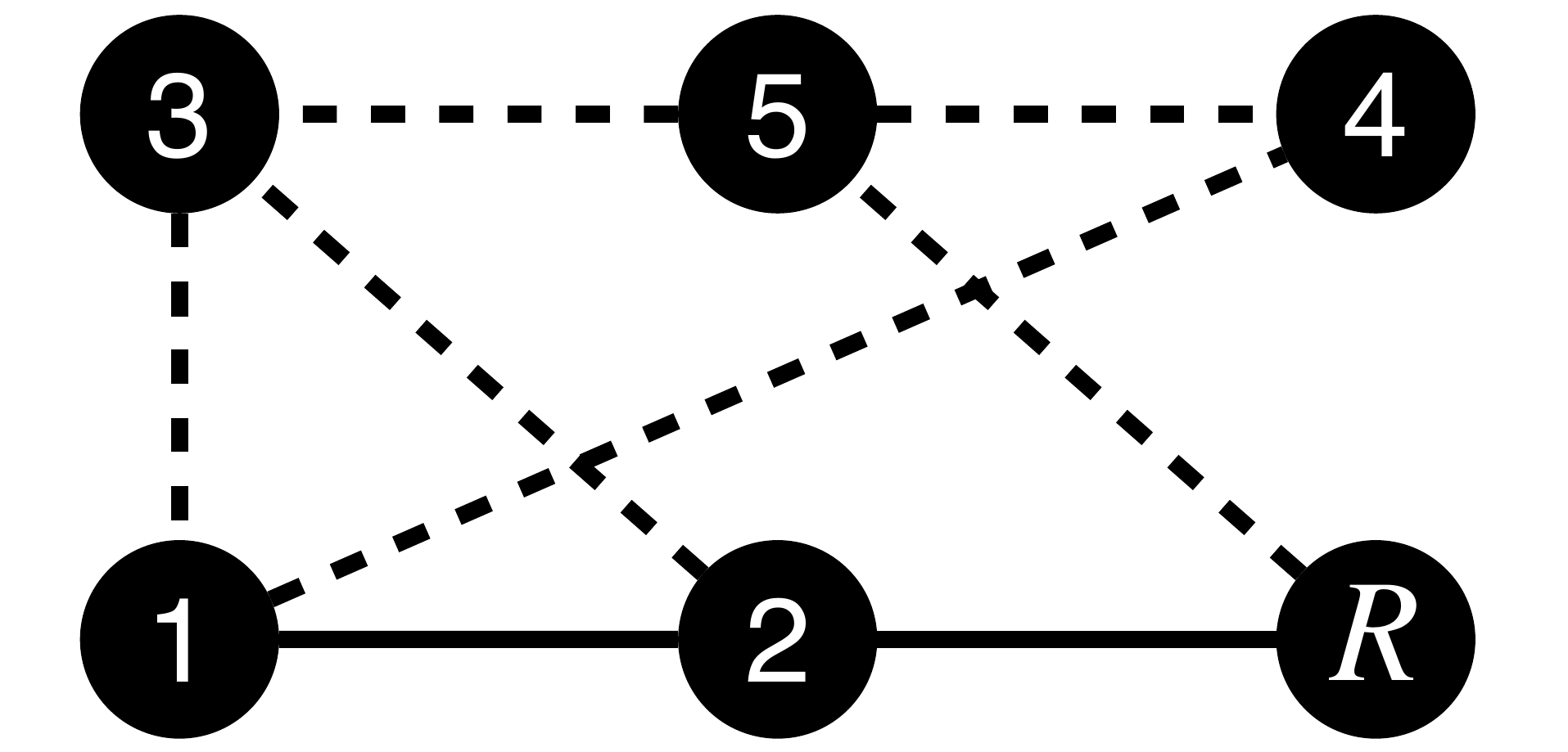}
    \caption{The graph transformation corresponding to the measurements on 3,4, and 5.
    The dashed lines represent the eliminated edges after the measurement.
    Find a path $P$ from 1 to $R$ via 2 in the graph $G_\text{5qubit}$. Parties 3, 4, and 5 perform measurements in $\{\ket{0},\ket{1}\}$ and communicate their outcomes to party 1 and 2.
    Parties 1 and 2 perform the local unitary operation based on the outcomes.
    The entire process above is represented by the elimination of vertices 3, 4, and 5.}
    \label{graph:5qubitb}
\end{figure}
\begin{figure}
    \centering
    \includegraphics[keepaspectratio, scale=0.18, angle=0]{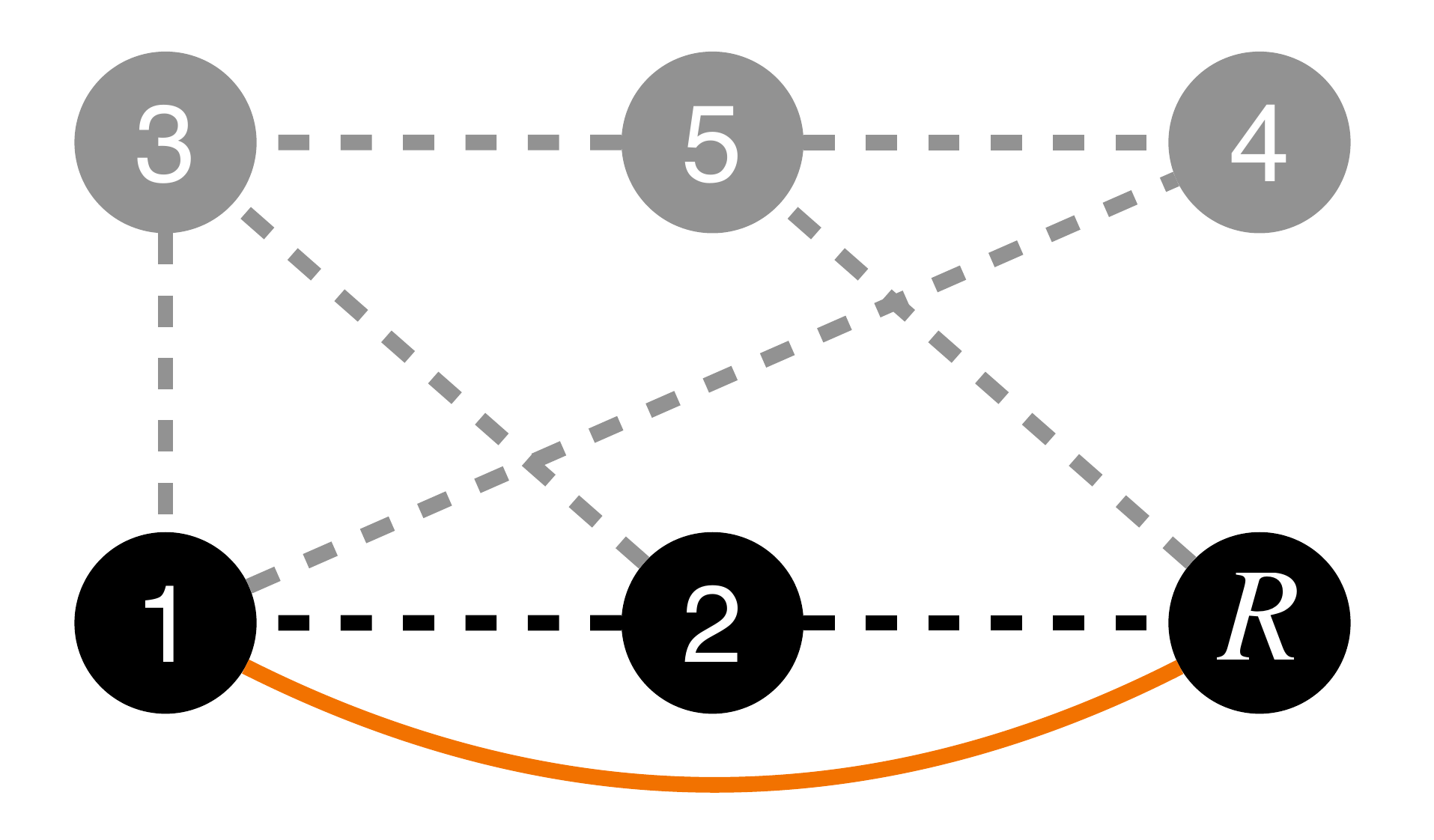}
    \caption{The graph transformation corresponding to the measurement on 2.
    The gray objects have been already eliminated in Fig.~\ref{graph:5qubitb}.
    The dashed lines represent the eliminated edges after the measurement on 2, and the orange line represents the added edge after the measurement on 2.
    Party $2$ performs the measurement in $\{\ket{\pm}\}$ and communicates its outcome to party 1.
     This process is represented by the local complementation on vertex 2. The graph state of the final graph in this figure is LC equivalent to $\ket{\Phi}^{(R1)}$. Therefore, party 1 performs a local unitary operation based on the outcomes, and the quantum information is extracted.}
    \label{graph:5qubitc}
\end{figure}

We first find a path in the graph from the reference party to a party where information is extracted.
We can find the path from vertex 1 to $R$ via vertex 2 and vertices 3, 4, and 5 are connected to the path.
Since parties in the path and adjacent to the path need to cooperate, all the parties need to cooperate to extract quantum information to party 1 in this case.
After parties 4 and 5 perform $H$ on their qubits and parties 3, 4, and 5 perform measurements in $\{\ket{0},\ket{1}\}$,
suppose that the parties 3, 4, and 5 obtain, e.g., the outcomes $\ket{1}$, $\ket{0}$, and $\ket{0}$, respectively.
Then, $\ket{\Phi}^{(RS)}$ is transformed to 
\begin{equation}
\label{eq:state5q-1}
    \frac{1}{2}\left[\ket{0}^{(R)}(\ket{11}^{(12)}+\ket{01}^{(12)})+\ket{1}^{(R)}(\ket{00}^{(12)}-\ket{10}^{(12)})\right].
\end{equation}
For these outcomes, the numbers of parties who obtained the outcome $\ket{1}$ and whose vertex is adjacent to party $l=1,2,R$ are given by
\begin{equation}
    m_1=1,~~m_2=1,~~m_R=0.
\end{equation}
Therefore, parties 1 and 2 perform $Z$ gate operation, and the state~(\ref{eq:state5q-1}) is transformed to
\begin{equation}
\label{eq:state5q-2}
    \frac{1}{2}\left[\ket{0}^{(R)}(\ket{11}^{(12)}-\ket{01}^{(12)})+\ket{1}^{(R)}(\ket{00}^{(12)}+\ket{10}^{(12)})\right].
\end{equation}

Next, party 2 performs the measurement in $\{\ket{\pm}\}$. If the outcome is $\ket{+}$, the state~(\ref{eq:state5q-2}) is transformed to
\begin{equation}
\label{eq:state5q-3}
    \frac{1}{2}\left[\ket{0}^{(R)}(\ket{1}-\ket{0})+\ket{1}^{(R)}(\ket{0}+\ket{1})\right].
\end{equation}
Since the outcome obtained by party 2 is $\ket{+}$, party 1 performs the local unitary represented as
\begin{equation}
    HZH(iY)^{-1/2},
\end{equation}
and the state ~(\ref{eq:state5q-3}) is transformed to $\ket{\Phi}^{(R1)}$ up to the global phase.
This means that the quantum information encoded in the five-qubit code can be extracted to qubit 1 with the above operations.
\end{exa}

\section{Applications to the quantum information splitting}
\label{sec:4}
In this section, we consider applications of Algorithm~\ref{alg2} to the LOCC-QIS, i.e., to share quantum information among multiple parties so that any of the parties cannot extract the shared quantum information on their own but can extract the quantum information only by LOCC\@.
To split and share quantum information is the first step to the quantum secret sharing (QSS)~\cite{PhysRevA.59.1829,gottesman2000theory,singh2005generalized}, which aims to achieve secure quantum communication if more than a certain number of parties cooperate. The difference between QSS and QIS is that QSS additionally requires that information is kept completely unknown if less than a sufficient number of parties cooperate, while in QIS the information may be partially known. The QIS only requires that the information can be exactly extracted if a sufficient number of parties cooperate.
Note that in contrast to the ordinary QIS and QSS settings~\cite{PhysRevA.59.1829,gottesman2000theory,singh2005generalized}, quantum communication between parties is prohibited in our setting.

In Sec.~\ref{sec:3}, we showed that not every party needs to cooperate to extract quantum information.
The number of parties necessary to cooperate depends on to which party the shared quantum information is extracted.
In the following, using tree-topology graph states, we propose the LOCC-QIS schemes with a hierarchical structure of authorities to access quantum information shared among multiple parties.
The hierarchical structure under the ordinary QIS settings was proposed in Refs.~\cite{wang2010hierarchicalQIS,wang2010hierarchical,wang2011multiparty} and called hierarchical quantum information splitting (HQIS).
Therefore, we call the LOCC-QIS scheme introduced here \textit{LOCC-HQIS}.
We limited ourselves to cases where quantum communication is not possible, but instead, we can design a variety of hierarchical structures corresponding to trees in the graph theory.
We will also show that, in a special case, the LOCC-HQIS can be applied to the QSS where the parties can only perform LOCC~\cite{PhysRevA.91.022330,PhysRevA.95.022320,yang2015quantum}.

\subsection{Notations}
We consider LOCC-HQIS using trees $G=(V,E)$ in the following.
The set of vertices $V$ is divided into $V_P$ and $V_{k,l}~(l=1,2,\ldots,m_k)$ as follows.
As shown in Fig.~\ref{graph:G},
let $G_{k,l}=(V_{k,l},E_{k,l})~(k=1,\ldots,|V_P|,l=1,\ldots,m_k)$ denote the $l$th subgraph connected to vertex $k$, which has only one vertex $u_{k,l}$ adjacent to vertex $v_k\in V_P$ and is not included in $P$.
The path $P=(V_P,E_P)$ is explicitly represented by
\begin{align}
    V_P&=\{v_k\}_{k=1}^{|V_P|},~v_1=R,~v_{|V_P|}=j,\\
    E_P&=\{\{v_k,v_{k+1}\}\}_{k=1}^{|V_P|-1}.
\end{align}
These subgraphs are taken to satisfy the relations given by
\begin{equation}
V=V_P\cup\left(\bigcup_{k=1}^{|V_P|}\bigcup_{l=1}^{m_k}V_{k,l}\right),
\end{equation}
\begin{equation}
E=E_P\cup\left(\bigcup_{k=1}^{|V_P|}\bigcup_{l=1}^{m_k}E_{k,l}\cup \{\{u_{k,l},v_k\}\}\right),
\end{equation}
where no pair of $V_P$ and $G_{k,l}$ has intersection, and $m_k$ denotes the number of vertices adjacent to $k$ but not in $V_P$.

\begin{figure}
    \centering
    \includegraphics[keepaspectratio, scale=0.12, angle=0]{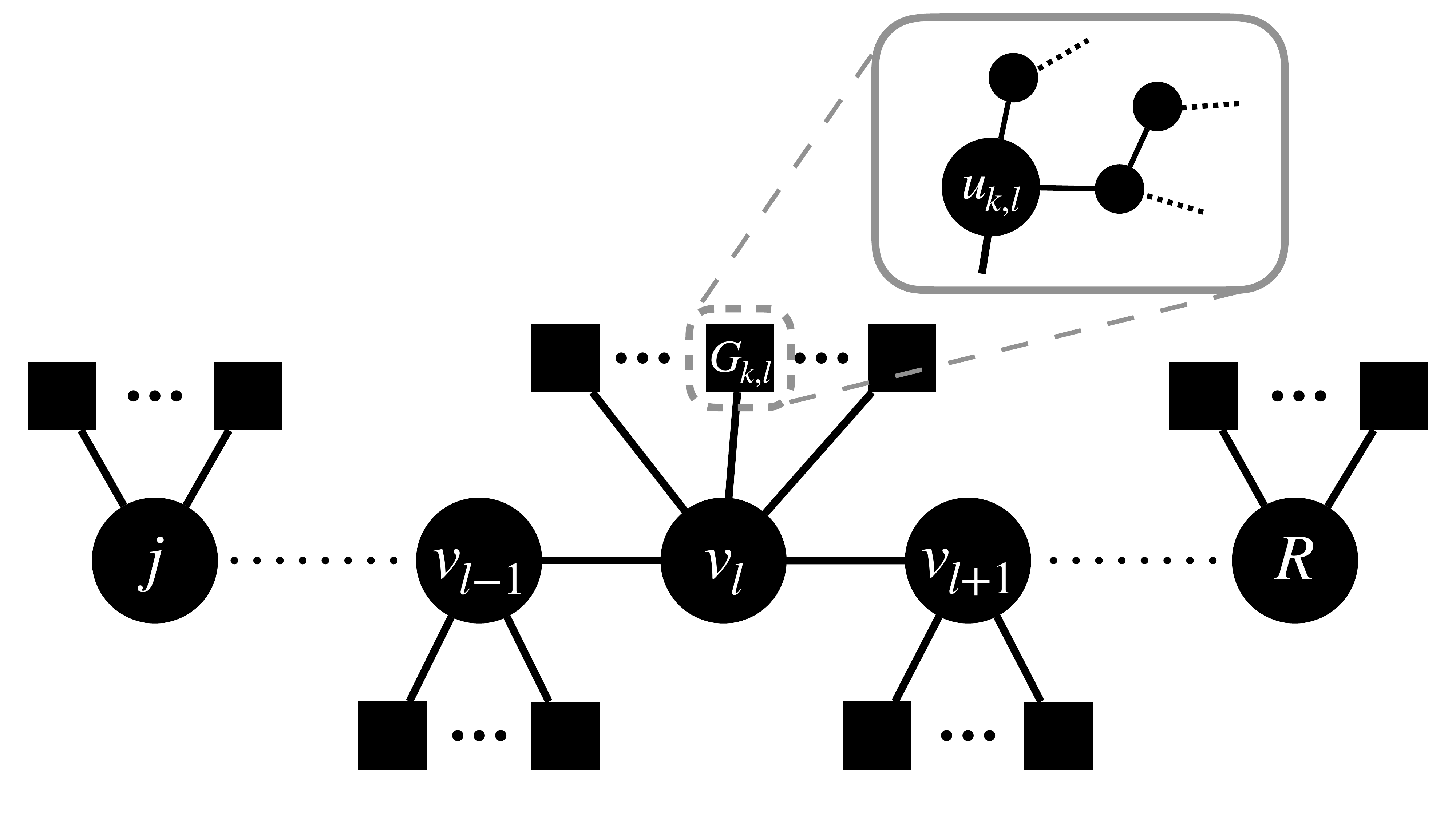}
    \caption{The graph corresponding to the tree-topology graph state. The circles represent the vertices of the path from vertex $j$ to vertex $R$ and the boxes represent subgraphs $G_{k,l}=(V_{k,l},E_{k,l})$ for $k=1,\ldots,|V_P|$ and $l=1,\ldots,m_k$. The dots represent that there are omitted vertices and edges.}
    \label{graph:G}
\end{figure}

\subsection{A property of protocol for extracting quantum information useful for HQIS}
\label{subsec:4A}
For the stabilizer codes represented by graph states corresponding to trees, Algorithm~\ref{alg2} is optimal for providing the minimal number of parties needed to cooperate.
That is, we can identify the hierarchical information access structure of an LOCC-HQIS scheme using such a stabilizer code as presented in the following theorem.

\begin{thm}
\label{thm:gQSS}
Let $G=(V, E)$ be a tree of a graph state $\ket{G}$ that is LC equivalent to the maximally entangled state $\ket{\Phi}^{(RS)}$ between the reference system and the stabilizer code.
The minimum number of parties that need to cooperate for LOCC extraction of quantum information to party $j$ equals to
\begin{equation}
    \left|\big\{k|k\in V,\{k,l\}\in E,l\in V_P\big\}\right|.
\end{equation}
This minimum number of parties is achievable by Algorithm~\ref{alg2}; that is, Algorithm~\ref{alg2} is optimal for a tree $G$ in terms of the number of parties needed for extracting quantum information.
\end{thm}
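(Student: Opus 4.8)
The plan is to prove the two matching inequalities. Throughout I would work with the tree graph state $\ket{G}$ rather than with $\ket{\Phi}^{(RS)}$ directly: by Proposition~\ref{prop:2} they differ by a product unitary, which each party can absorb into its own operations, so a given set of parties can LOCC-extract to $j$ from $\ket{\Phi}^{(RS)}$ exactly when it can from $\ket{G}$, and product unitaries also preserve separability across every bipartition. For the upper bound I would simply read off Algorithm~\ref{alg2}: the parties that ever act are those in $V_P$ or adjacent to $V_P$ (they apply the local Clifford $U_k^{(k)}$ in line~4), the neighbours of $V_P$ lying outside $V_P$ (they perform a $Z$-measurement), and the interior path vertices $v_2,\dots,v_{|V_P|-1}$ together with $j$ (they perform $X$-measurements, $Z$-corrections and a final local Clifford, while $R$ stays idle and its Clifford is transferred onto $j$ by the transpose trick); all of these lie in $B\coloneqq\{k\in V:\{k,l\}\in E\text{ for some }l\in V_P\}$, whereas the deeper subtree vertices are never touched. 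Combined with Theorem~\ref{thm:thm10}, this shows extraction to $j$ is achieved with no more cooperating parties than claimed, so it remains to prove the lower bound: that every $p\in B$ other than $R$ is indispensable.

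For the lower bound the common reduction is that an uncooperative $p$ applies the identity and sends no message, so from the protocol's viewpoint its qubit may be traced out: if extraction succeeded with cooperation set $\{1,\dots,n\}\setminus\{p\}$, then $\mathrm{Tr}_p\ket{G}\bra{G}$ could be converted by LOCC among that set (with $R$ idle) into a state whose $\{R,j\}$-marginal is $\ket{\Phi}^{(Rj)}$. From $\ket{G}=\bigl(\prod_{b\in N_p}CZ^{(pb)}\bigr)\bigl(\ket{+}^{(p)}\otimes\ket{G-p}\bigr)$ one obtains $\mathrm{Tr}_p\ket{G}\bra{G}=\tfrac12\bigl(\ket{G-p}\bra{G-p}+Z^{(N_p)}\ket{G-p}\bra{G-p}Z^{(N_p)}\bigr)$. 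When $p\in V_P$ (so $p\ne R,j$), deleting $p$ from the tree disconnects $R$ from $j$; writing $\ket{G-p}$ as a tensor product of graph states over the connected components of $G-p$ and observing that $Z^{(N_p)}$ factorizes across those components, $\mathrm{Tr}_p\ket{G}\bra{G}$ is a mixture of two product states across the cut separating the $R$-component from the rest, hence separable across a bipartition with $R$ and $j$ on opposite sides. Since any protocol among the cooperating parties (with $R$ idle) is a fortiori LOCC across that bipartition, it cannot raise entanglement across it, and a separable state cannot be turned into one whose $\{R,j\}$-marginal — one qubit from each side — is maximally entangled; so extraction is impossible, forcing every interior path vertex to cooperate.

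The hard part, and what I expect to be the main obstacle, is the case $p=u_{m,l}\notin V_P$ hanging off a path vertex $v_m$: deleting $p$ does not disconnect $R$ from $j$, and in fact $\mathrm{Tr}_p\ket{G}\bra{G}$ is entangled across every bipartition with $R$ and $j$ opposite, so a bare entanglement-monotone argument does not suffice. I would handle this by induction on $|V(G)|$. Because $G$ is a tree, $p$ separates its subtree $T_p$ from the rest $G'\coloneqq G-T_p$, which still contains all of $V_P$, and a short computation gives $\mathrm{Tr}_p\ket{G}\bra{G}=\tfrac12\,\ket{G'}\bra{G'}\otimes\sigma_0+\tfrac12\,\bigl(Z^{(v_m)}\ket{G'}\bra{G'}Z^{(v_m)}\bigr)\otimes\sigma_1$ with $\sigma_0,\sigma_1$ orthogonal pure states on the detached descendants of $p$; since $\ket{G'}$ and $Z^{(v_m)}\ket{G'}$ are orthogonal too, the two ``branches'' are perfectly classically correlated. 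The crux is to show that extracting to $j$ from this mixture reduces either to extracting to $j$ from the graph state $\ket{G'}$ on the \emph{strictly smaller} tree $G'$ (whose $R$--$j$ path is still $V_P$, and whose path-boundary is $B\setminus\{p\}$), to which the induction hypothesis applies, or to a separability obstruction of the type already treated: the branch label must be recoverable by the cooperating parties without acting on the idle $R$, and exactly when it is not — which is precisely when tracing out $p$ leaves the path vertex $v_m$ $Z$-dephased — the qubit $v_m$ degenerates to a classical relay and $\mathrm{Tr}_{T_p}\ket{G}\bra{G}$ becomes separable across a bipartition with $R$ and $j$ opposite ($R$-component versus the rest when $v_m$ is interior, and ``$v_m$ unentangled from everything'' when $v_m\in\{R,j\}$). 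Pinning down this dichotomy cleanly — in particular, treating carefully the classical correlation between the detached subtree and the neighbourhood of the path so that $p$'s descendants cannot circumvent it — is the delicate step; once it is carried out, both inequalities are established and the theorem follows.
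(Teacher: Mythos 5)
Your upper bound (reading off which parties ever act in Algorithm~\ref{alg2}) and your lower bound for the path vertices (deleting $p\in V_P$ disconnects the tree, so $\mathrm{Tr}_p\ket{G}\bra{G}$ is separable across an $R$-versus-$j$ cut and monotonicity of an entanglement measure finishes the job) both match the paper. The gap is exactly in the case you flag as the hard part: you set out to prove that \emph{every} vertex $p=u_{m,l}\in B\setminus V_P$ is individually indispensable, but that statement is false, so no refinement of your induction/dichotomy can establish it. Take the line graph $R$--$j$--$u$--$w$, so that $V_P=\{R,j\}$ and the subtree hanging off $j$ is $u$--$w$ with root $u$. Then $\mathrm{Tr}_u\ket{G}\bra{G}=\tfrac12\bigl(\ket{G_{Rj}}\bra{G_{Rj}}\otimes\ket{+}\bra{+}^{(w)}+Z^{(j)}\ket{G_{Rj}}\bra{G_{Rj}}Z^{(j)}\otimes\ket{-}\bra{-}^{(w)}\bigr)$, so $w$ measures in $\{\ket{\pm}\}$, announces the outcome, and $j$ applies $Z$ accordingly followed by a fixed local Clifford: extraction to $j$ succeeds with cooperating set $\{R,j,w\}$, i.e.\ \emph{without} $u$. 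The descendants of $p$ can circumvent the loss of $p$ --- precisely the possibility you hoped to exclude in your ``delicate step.'' (This does not contradict the theorem, since $|\{R,j,w\}|=3=|B|$; it only refutes your intermediate lemma. The minimum is a cardinality, not a canonical indispensable set, and the two notions come apart exactly at the subtree roots.)

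The paper's lower bound is correspondingly weaker per vertex but sufficient in aggregate: besides showing every path vertex is necessary, it shows that for each subtree $G_{k,l}$ hanging off the path \emph{at least one} of its parties must cooperate. This is done by noting that the reduced state of $\ket{G}$ on $V_{k,l}$ has rank two, so the entire subtree may be treated as a single effective qubit; discarding that effective qubit (equivalently, measuring it in its effective $\ket{\pm}$ basis and averaging) yields, after the local complementations of Proposition~\ref{prop:graph}, a state that is separable across a cut placing $R$ and $j$ on opposite sides, whence the same monotonicity argument applies. Counting one necessary party per hanging subtree plus the path vertices reproduces $|B|$ and matches the achievability from Algorithm~\ref{alg2}. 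To repair your proof, replace ``every $u_{m,l}$ is indispensable'' by this per-subtree statement; the effective-qubit reduction then lets you reuse your separability argument essentially verbatim, and no induction is needed.
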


\begin{proof}
In the proof, we show the following two facts.
\begin{enumerate}
    \item If any one of the parties corresponding to a vertex in $V_P$ does not cooperate, the extraction of quantum information to party $j$ is impossible.
    \item If all the parties corresponding to the vertices in one of the subgraphs ${\{G_{m,l}\}}_{m,l}$ do not cooperate, the extraction of quantum information to party $j$ is impossible.
\end{enumerate}
We show the impossibility of the above two by using the monotonicity of distillable entanglement.
If these facts hold, the number of parties needed for the extraction of quantum information must be more than or equal to $N=\left|\big\{k|k\in V,\{k,l\}\in E,l\in V_P\big\}\right|$.
Then, Algorithm~\ref{alg2} assures the existence of such an algorithm performed with $N$ parties.

See Appendix~\ref{app:thm7} for the details.
\end{proof}

\subsection{Application of extraction of quantum information using tree states to HQIS}
\label{subsec:4B}
Based on the result shown in the previous subsection, we give four examples of LOCC-HQIS schemes using the stabilizer codes whose corresponding maximally entangled states (Eq.~\eqref{eq:RS}) are given by the tree-topology graph states.
First, in Example~\ref{exa:lin}, we consider a scheme using a line-topology graph state as the simplest LOCC-HQIS scheme.
Next, in Example~\ref{exa:tree}, we consider a scheme using tree-topology graph states.
We also describe that a multiparty HQIS~\cite{wang2011multiparty} and an $(n,n)$-threshold scheme using the  Greenberger\UTF{2013}Horne\UTF{2013}Zeilinger (GHZ) states~\cite{Markham2008} in Examples~\ref{exa:multi} and~\ref{exa:GHZ}, respectively, are special examples where the hierarchical structures do not change even when the quantum communication is prohibited.
Note that in general the hierarchical structure changes depending on whether quantum communication is allowed or not.

In the following example of the line-topology graph with $(n+1)$ vertices, the party labeled by $k(<n)$ needs other $k$ parties' cooperation to extract the shared quantum information.

\begin{exa}[LOCC-HQIS with the line-topology graph with $5$ vertices]
\label{exa:lin}
Let us consider a graph state on 5 qubits associated with the line-topology graph shown in Fig.~\ref{graph:line3}. 
\begin{figure}
    \centering
    \includegraphics[keepaspectratio, scale=0.12, angle=0]{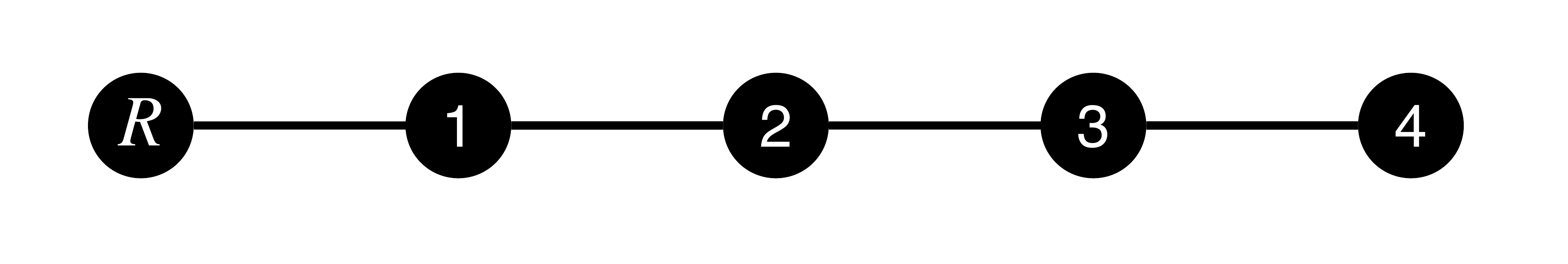}
    \caption{The line-topology graph with 5 vertices. In this case, parties $j=1,2,3$ need $j+1$ parties' cooperation and party 4 needs all 4 parties' cooperation to extract the quantum information. Therefore, in the LOCC-QSS corresponding to this graph state, the authorities of parties in terms of the number of parties needed to extract the shared quantum information are in the increasing order of the integers associated with parties.}
    \label{graph:line3}
\end{figure}
This graph state is also the stabilizer state of the stabilizer generated by the following 5 independent generators
\begin{equation}
\begin{split}
    X^{(R)}Z^{(1)}I^{(2)}I^{(3)}I^{(4)},\\
    Z^{(R)}X^{(1)}Z^{(2)}I^{(3)}I^{(4)},\\
    I^{(R)}Z^{(1)}X^{(2)}Z^{(3)}I^{(4)},\\
    I^{(R)}I^{(1)}Z^{(2)}X^{(3)}Z^{(4)},\\
    I^{(R)}I^{(1)}I^{(2)}Z^{(3)}X^{(4)}.
\end{split}
\end{equation}
This state can be represented by $|\Phi\rangle^{(RS)}$ by regarding it as the maximum entangled state between $\mathcal{H}^{(R)}$ and the stabilizer code by the stabilizer $S$ generated by $Z^{(1)}X^{(2)}Z^{(3)}I^{(4)}$, $I^{(1)}Z^{(2)}X^{(3)}Z^{(4)}$ and $I^{(1)}I^{(2)}Z^{(3)}X^{(4)}$, where its logical $Z$ operator is $X^{(1)}Z^{(2)}I^{(3)}I^{(4)}$.

Now, we consider the extraction of single-qubit quantum information encoded in the stabilizer code $\mathcal{H}^{(S)}$ of 4 qubits.
According to Theorem~\ref{thm:gQSS}, we can calculate the minimum number of parties needed to cooperate to extract quantum information to $j$.
In the case of extracting to party 1, it only needs the cooperation of party 2, regardless of what is performed on parties 3 and 4. 
Party 2 performs the measurement in $\{\ket{0}^{(2)},\ket{1}^{(2)}\}$ and communicates the measurement outcome to party 1, and then 1 performs a local Clifford operation conditioned on the outcome. 
By contrast, in the case of extracting to party $2$, party $2$ needs the cooperation of both party 1 and party 3.
Similarly, party $3$ needs the cooperation of all the other parties, and party $4$ needs it as well.
Therefore, this scheme has a hierarchical information access structure.
This scheme can be straightforwardly generalized to line-topology graphs with $n$ vertices for any $n=2,3,4,5,\ldots$.
\end{exa}

More complex information access structures can be considered by using trees as shown in the following example.
\begin{exa}[LOCC-HQIS with the tree with 5 vertices]
\label{exa:tree}
Let us consider the tree with 5 vertices shown in Fig.~\ref{graph:tree}.
\begin{figure}
    \centering
    \includegraphics[keepaspectratio, scale=0.15, angle=0]{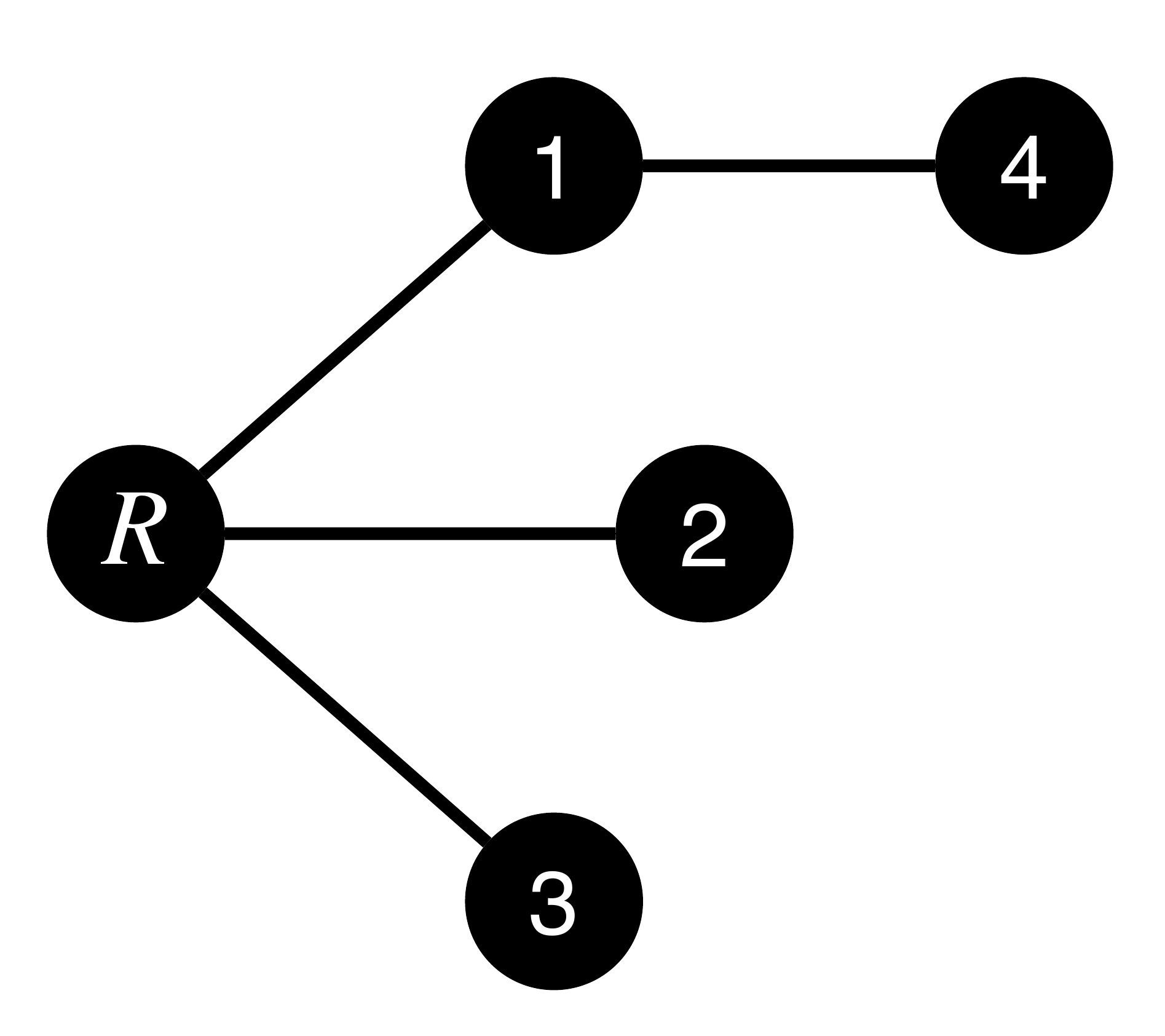}
    \caption{A tree with 5 vertices. In the LOCC-HQIS using this graph, to extract quantum information, parties $1$ and $4$ need the cooperation of all the other parties, and party $2$($3$) needs the cooperation of parties $3$($2$) and $1$.}
    \label{graph:tree}
\end{figure}
According to Theorem~\ref{thm:gQSS}, in the LOCC-HQIS using the graph represented by Fig.~\ref{graph:tree}, to extract quantum information, party $1$ or $4$ needs the cooperation of all the other parties, and party $2$($3$) needs the cooperation of parties $3$($2$) and $1$, respectively.
\end{exa}

The multiparty HQIS proposed in Ref.~\cite{wang2011multiparty} can also be considered as an LOCC-HQIS scheme corresponding to the tree shown in Fig.~\ref{graph:MHQIS}.

\begin{exa}[The multiparty HQIS among $3+2$ agents]
\label{exa:multi}
The tree shown in Fig.~\ref{graph:MHQIS} corresponds to the multiparty HQIS among $3+2$ agents presented in Ref.~\cite{wang2011multiparty}.
In this scheme, the parties $B_1,B_2$ and $B_3$ need only the other parties labeled $B$ to extract the quantum information while $C_1$ and $C_2$ need all the other 4 parties under the LOCC setting.
Such a hierarchical structure is exactly the same as that shown in Ref.~\cite{wang2011multiparty} when quantum communication is allowed.
\begin{figure}
    \centering
    \includegraphics[keepaspectratio, scale=0.15, angle=0]{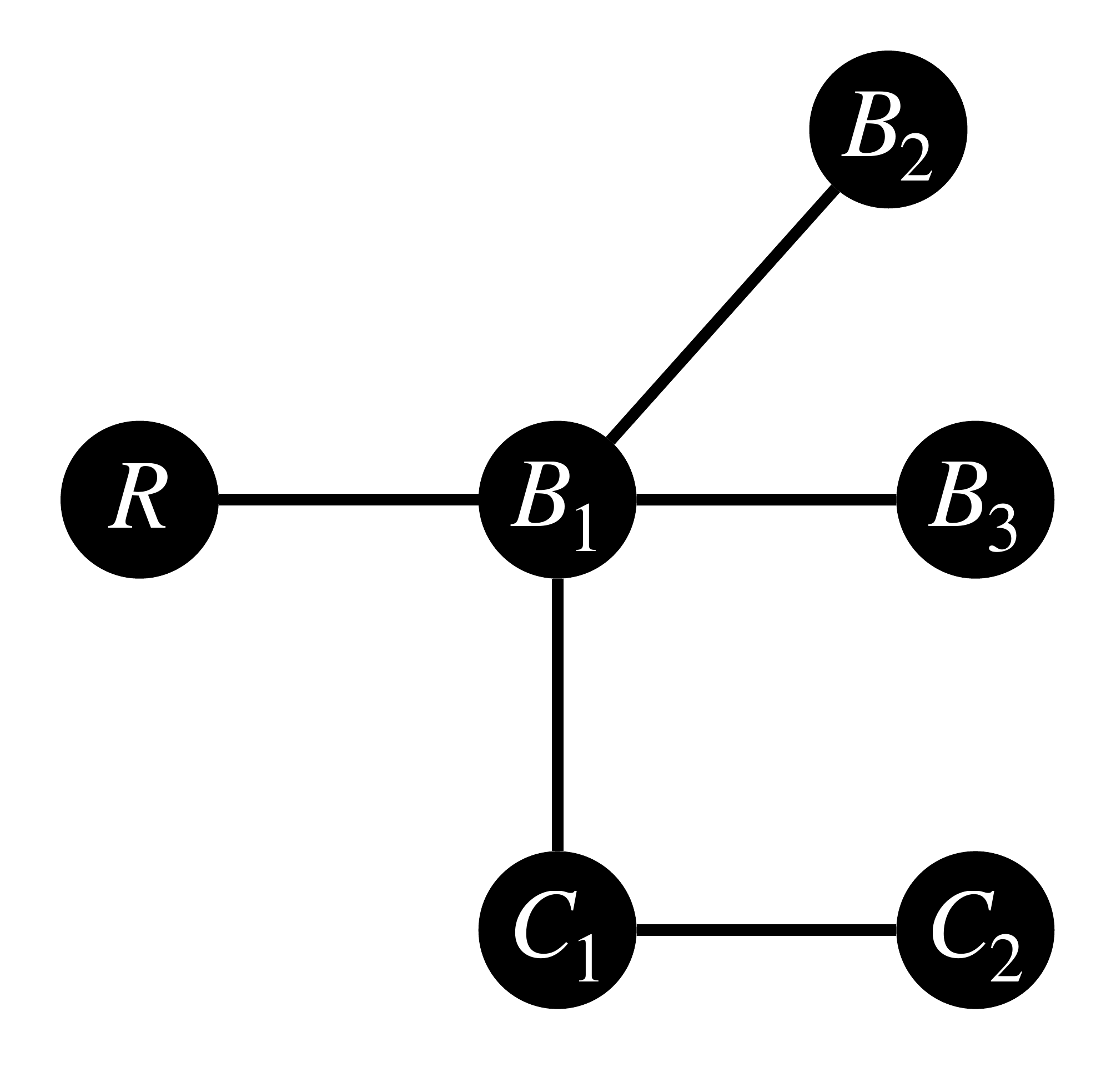}
    \caption{A graph corresponding to the multiparty HQIS~\cite{wang2011multiparty}. $B_j$ can extract the shared quantum information with all the other $B$-labeled parties and one party of $C$-labeled parties. However, $C_j$ needs not only all the other $C$-labeled parties' cooperations but also all the $B$-labeled parties'. That is, $B$-labeled parties have stronger authorities than $C$-labeled parties.}
    \label{graph:MHQIS}
\end{figure}
\end{exa}

We can also consider the $(n,n)$-threshold QSS schemes using GHZ states proposed in Ref.~\cite{Markham2008} in terms of the graph representation.
\begin{exa}[The $(n,n)$-threshold LOCC-QSS using GHZ states~\cite{Markham2008}]
\label{exa:GHZ}
The tree shown in Fig.~\ref{graph:nnTQSS} corresponds to the $(n,n)$-threshold QSS using the GHZ states~\cite{greenberger1990bell}.
In this scheme, every party needs all the other parties' cooperation to extract the quantum information.
This scheme is relevant to the GHZ states since $\ket{\Phi}^{(RS)}$ is represented by
\begin{equation}
\begin{split}
    &\ket{\Phi}^{(RS)}\\
    =&\frac{1}{\sqrt{2}}\{\ket{0}^{(R)}(\ket{00\cdots 0}^{(12\cdots n)}+\ket{11\cdots 1}^{(12\cdots n)})\\
    &+\ket{1}^{(R)}(\ket{00\cdots 0}^{(12\cdots n)}-\ket{11\cdots 1}^{(12\cdots n)})\}.
\end{split}
\end{equation}
The logical states of this case in the stabilizer code are GHZ states.
The state $\ket{\Phi}^{(RS)}$ is LC equivalent to the GHZ state on $n+1$ qubits shown in Fig.~\ref{graph:nnTQSS}.
\begin{figure}
    \centering
    \includegraphics[keepaspectratio, scale=0.15, angle=0]{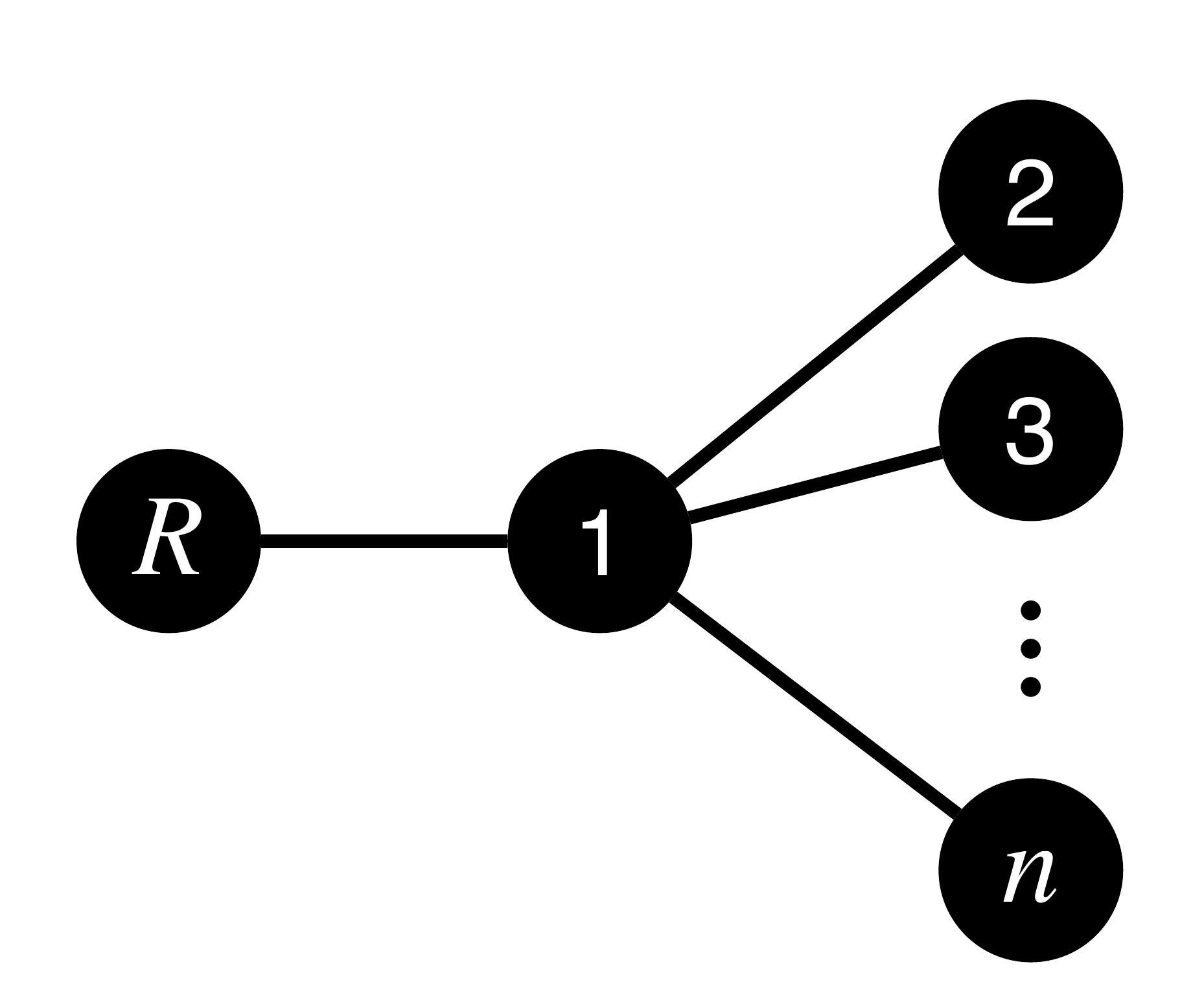}
    \caption{A graph corresponding to the $(n,n)$-threshold QSS using the GHZ states.}
    \label{graph:nnTQSS}
\end{figure}

From the graph represented in Fig.~\ref{graph:nnTQSS}, according to Theorem~\ref{thm:gQSS}, we can see that all the parties are needed to cooperate for the LOCC extraction of quantum information wherever the information is extracted.
The LOCC-QIS scheme corresponding to the GHZ state also works as an $(n,n)$-threshold LOCC-QSS scheme, since no information is accessible if one of the parties does not collaborate even when quantum communication is allowed as shown in Ref.~\cite{Markham2008}.
\end{exa}

As shown in these examples, our results can serve as a unified theoretical tool for studying the QIS and QSS schemes, especially based on the tree-topology graphs.
Using our results in Theorem~\ref{thm:gQSS}, we can design QIS and QSS schemes with various hierarchical information access structures.

\section{conclusion}
\label{sec:5}

In this paper, we analyzed the task of extracting single-qubit quantum information encoded in the stabilizer code of multiple qubits by introducing parties associated with each qubit.
We showed a protocol of LOCC extraction of quantum information to a fixed party given as desired in $\order{n^3}$ time in terms of the number of qubits $n$ in Theorem~\ref{thm:thm10}, whenever possible.
The algorithm to find the protocol for extraction of quantum information is given in Algorithm~\ref{alg2}.
As a consequence of Theorem~\ref{thm:thm10}, the quantum information encoded in the stabilizer code can always be extracted to some party by LOCC in $\order{n^3}$ time.
This algorithm gives an efficient way to decode stabilizer codes, even if we can perform only LOCC rather than multiqubit gates.

Our analysis showed that to extract quantum information, not all parties need to cooperate.
The number of parties needed to cooperate depends on which party quantum information is decoded into.
Theorem~\ref{thm:gQSS} reveals this dependency in cases where the stabilizer code can be represented as a tree-topology graph state up to LC equivalence.
Investigating examples, we also showed that this result can be applied to the LOCC-HQIS schemes including the schemes proposed in the past such as the multiparty HQIS~\cite{wang2011multiparty}, which have the hierarchical information access structures among parties in terms of the minimum numbers of parties needed to cooperate.
For future works, it is open to investigating the hierarchical structure in the stabilizer codes not corresponding to the tree-topology graphs and characterizing distributed quantum information by these structures.

For future works, it is open to showing if the extraction of more than one-qubit quantum information encoded in the stabilizer code is performed only by LOCC in polynomial time in terms of the number of qubits.
Extraction of multi-qubit information is defined as an isometry map which transforms the multi-qubit logical basis $\ket{ab\ldots}^{(S)}$ to the computational basis $\ket{a}^{(j)}\ket{b}^{(k)}\ldots$ of given multiple qubits for extraction.  
The stabilizer codes with more than one logical qubits have applications to low-overhead fault-tolerant quantum computation~\cite{https://doi.org/10.48550/arxiv.2207.08826,gottesman2014faulttolerant,8555154}.
The problem of LOCC extraction of multiple-qubit information from stabilizer codes can be viewed as the problem of extracting multiple Bell states between the reference system consisting of \textit{multiple} qubits and extraction qubits from a given graph state by LOCC without accessing the reference qubits, similarly to the case of LOCC extraction of single-qubit information.

However, the possibility of extraction of multi-qubit information cannot be determined solely from the connectivity in the graph between the reference vertices (qubits) and the vertices (qubits) where information is extracted unlike the single-qubit case as explained in Appendix~\ref{app:C}, and therefore, further study is needed to deal with multi-qubit information.
The problem to determine the ability to extract multiple Bell states from a given general graph state by LOCC is NP complete.~\cite{Dahlberg2020transforminggraph}.
Nevertheless, there is still a possibility that the problem can be solved efficiently by utilizing the fact that the initial graph state is restricted to the maximum entanglement state between the reference qubits and code space in LOCC extraction, or by imposing an additional restriction to the initial graph state to describe a particular class of stabilizer codes.

For further applications, it would be interesting to investigate the applicability of our LOCC decoding algorithm to implementing faulty non-Clifford measurement used for preparing a magic state encoded in a quantum error-correcting code in fault-tolerant quantum computation (FTQC)~\cite{Litinski2019magicstate,Lodyga2015}.
In the faulty non-Clifford measurement, one needs to extract an encoded state of a logical qubit into one of the physical qubits in a similar way to our protocol while the error analysis is also needed for the feasibility of FTQC\@.
Our algorithm may lead to insight into conditions of magic-state-preparation protocols that work for general stabilizer codes.

\section*{Acknowledement}
KS is supported by Forefront Physics and Mathematics Program to Drive Transformation (FoPM), a World-leading Innovative Graduate Study (WINGS) Program, the University of Tokyo. MM is supported by MEXT Quantum Leap Flagship Program (MEXT QLEAP) JPMXS0118069605, and
Japan Society for the Promotion of Science (JSPS) KAKENHI Grant No. 21H03394.
HY acknowledges JST PRESTO Grant Number JPMJPR201A and MEXT Quantum Leap Flagship Program (MEXT QLEAP) JPMXS0118069605, JPMXS0120351339.

\appendix
\section{Examples}
\label{sec:appendix}
\subsection{A simple example of extraction of quantum information encoded in two qubits}
\label{app:Ex1}
We consider single-qubit information encoded in a two-qubit system $\mathcal{H}^{(1)} \otimes \mathcal{H}^{(2)} =\mathbb{C}^2 \otimes \mathbb{C}^2$ according to the logical basis states of $\mathcal{H}^{(S)}$ given by
\begin{equation}
    \ket{a}^{(S)}=\ket{a}^{(1)}\ket{a}^{(2)}~(a=0,1),
\end{equation}
where $\ket{a}^{(1)} \in \mathcal{H}^{(1)}$ and $\ket{a}^{(2)} \in \mathcal{H}^{(2)}$ are the computational basis states and a tensor product sign $\otimes$ for pure states is omitted for brevity.
Let $\ket{\pm}^{(2)}$ be the eigenstates of $X^{(2)}\coloneqq\ket{0}\bra{1}^{(2)}+\ket{1}\bra{0}^{(2)}$ with the eigenvalues $\pm1$, respectively.
To extract quantum information encoded in $\rho^{(S)} \in \mathcal{D}(\mathcal{H}^{(S)})$ to $\rho^{(1)} \in \mathcal{D}(\mathcal{H}^{(1)})$ of party $1$, party $2$ performs a measurement in $\{\ket{\pm}^{(2)}\}$ and communicate the outcome of the measurement to party $1$ as shown in the following.

Since $\rho^{(S)}$ can be represented by
\begin{align}
\nonumber
    \rho^{(S)}=&\sum_{a,b=0,1}\rho_{ab}\ket{a}\bra{b}^{(1)}\otimes\ket{a}\bra{b}^{(2)}\\\nonumber
     =&\frac{1}{2}\Big(
      \sum_{a,b=0,1}\rho_{ab}\ket{a}\bra{b}^{(1)}\otimes \ket{+}\bra{+}^{(2)}\\\nonumber
     &+\sum_{a,b=0,1}(-1)^{b}\rho_{ab}\ket{a}\bra{b}^{(1)}\otimes\ket{+}\bra{-}^{(2)}\\\nonumber
     &+\sum_{a,b=0,1}(-1)^{a}\rho_{ab}\ket{a}\bra{b}^{(1)}\otimes\ket{-}\bra{+}^{(2)}\\
     &+\sum_{a,b=0,1}(-1)^{a+b}\rho_{ab}\ket{a}\bra{b}^{(1)}\otimes\ket{-}\bra{-}^{(2)}
     \Big),
\end{align}
the state after the measurement on  $\mathcal{H}^{(2)}$ in $\{\ket{\pm}^{(2)}\}$ with the outcome $\ket{+}^{(2)}$ is given by
\begin{equation}
    \sum_{a,b=0,1}\rho_{ab}\ket{a}\bra{b}^{(1)},
\end{equation}
and with the outcome $\ket{-}^{(2)}$ is given by
\begin{equation}
    \sum_{a,b=0,1}(-1)^{a+b}\rho_{ab}\ket{a}\bra{b}^{(1)}.
\end{equation}
If the outcome is $\ket{+}^{(2)}$, the state is extracted to party $1$, and
if the outcome is $\ket{-}^{(2)}$, the state can be extracted to party $1$ by performing the $Z$ gate on qubit $1$, where $Z^{(1)}\coloneqq \ket{0}\bra{0}^{(1)}-\ket{1}\bra{1}^{(1)}$. This protocol only uses LOCC, and thus the protocol implements an LOCC map for extracting quantum information.

\subsection{A simple example of a state transformation equivalent to an extraction}
\label{app:Ex2}
According to Lemma~\ref{lem:1}, LOCC extraction of quantum information in Appendix~\ref{app:Ex1} is equivalent to the state transformation by LOCC from
\begin{equation}
    |\Phi\rangle^{(RS)}=(\ket{0}^{(R)}\ket{0}^{(1)}\ket{0}^{(2)}+\ket{1}^{(R)}\ket{1}^{(1)}\ket{1}^{(2)})/\sqrt{2}
\end{equation}
to 
\begin{equation}
    |\Phi\rangle^{(R1)}=(\ket{0}^{(R)}\ket{0}^{(1)}+\ket{1}^{(R)}\ket{1}^{(1)})/\sqrt{2}.
\end{equation}
This transformation can be achieved by the following LOCC protocol.
First, party 2 performs a measurement in $\{\ket{\pm}^{(2)}\}$ and communicates the outcome to party 1.  Then $|\Phi\rangle^{(RS)}$ is transformed to
\begin{equation}
\frac{1}{\sqrt{2}}(\ket{0}^{(R)}\ket{0}^{(1)}\pm\ket{1}^{(R)}\ket{1}^{(1)})
\end{equation}
where the $\pm$ sign of $\ket{1}^{(R)}\ket{1}^{(1)}$ corresponds to the outcome $\ket{\pm}$ of the measurement. If the outcome is $\ket{+}$, the state is transformed to $|\Phi\rangle^{(R1)}$. If the outcome of the measurement is $\ket{-}$, party 1 performs the $Z$ gate. Then we have $|\Phi\rangle^{(R1)}$. The sequence of operations in this protocol is the same as the one in Appendix~\ref{app:Ex1}.

\section{Proofs}
\label{sec:appendix_proof}

\subsection{Proof of Theorem~\ref{thm:thm10}}
\label{app:thm5}

We state a useful lemma about quantum state merging~\cite{Yamasaki2019} to prove Theorem~\ref{thm:thm10}.
The task in Theorem~\ref{thm:thm10} can be regarded as extracting quantum information to $\mathcal{H}^{(j)}$ from the bipartite system $\mathcal{H}^{(j)}\otimes\mathcal{H}^{(k\neq j)}~(\mathcal{H}^{(k\neq j)}=\bigotimes_{k\neq j}\mathcal{H}^{(k)})$.
The following Lemma~\ref{lem:cor6} gives a necessary condition for LOCC extraction from a bipartite system $\mathcal{H}^{(A)}\otimes\mathcal{H}^{(B)}$ to $\mathcal{H}^{(B)}$.
\begin{lem}Corollary~6 in Ref.~\cite{Yamasaki2019}:
\label{lem:cor6}
Consider a tripartite pure state
\begin{equation}
\begin{split}
    \ket{\Phi}&\coloneqq\frac{1}{\sqrt{2}}(\ket{0}^{(R)}\ket{\psi_0}^{(AB)}+\ket{1}^{(R)}\ket{\psi_1}^{(AB)})\\
    &\in\mathcal{H}^{(R)}\otimes\mathcal{H}^{(A)}\otimes\mathcal{H}^{(B)},
\end{split}
\end{equation}
where $\ket{\psi_0}^{(AB)}$ and $\ket{\psi_1}^{(AB)}$ are orthogonal pure states of $\mathcal{H}^{(A)}\otimes\mathcal{H}^{(B)}$. 
If $\ket{\Phi}$ is represented by
\begin{equation}
    \ket{\Phi}=\frac{1}{\sqrt{2}}\left(\ket{0}^{(R)}\ket{0}^{(A)}+\ket{1}^{(R)}\ket{1}^{(A)}\right)\otimes\ket{\psi}^{(B)},
\end{equation}
where $\ket{j}^{(R)}~(j=0,1)$ $(\ket{j}^{(A)}~(j=0,1))$ are orthogonal pure states in $\mathcal{H}^{(R)}$ $(\mathcal{H}^{(A)})$ and $\ket{\psi}$ is a pure state in $\mathcal{H}^{(B)}$, there exists no LOCC map which transforms $\ket{\Phi}$ to $(\ket{0}^{(R)}\ket{0}^{(B)}+\ket{1}^{(R)}\ket{1}^{(B)})/\sqrt{2}$.
\end{lem}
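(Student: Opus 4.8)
The plan is to exploit the one structural feature that distinguishes extraction from ordinary LOCC state conversion: the reference system $R$ is never acted upon. First I would regroup the players. Any map that could extract quantum information to $\mathcal{H}^{(B)}$ has the form $\mathrm{id}^{(R)}\otimes\Lambda$, where $\Lambda$ is an LOCC map among the parties holding $\mathcal{H}^{(A)}$ and $\mathcal{H}^{(B)}$ (with $\mathcal{H}^{(A)}$ possibly itself multipartite), ending by discarding the $A$-registers so that the output lives on $\mathcal{H}^{(B)}$. Since $\mathrm{id}^{(R)}$ commutes with everything and $R$ is a passive spectator, the composite $\mathrm{id}^{(R)}\otimes\Lambda$ is in particular an LOCC operation with respect to the bipartition $RA\mid B$ (glue $R$ onto the side of the $A$-parties). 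This observation is all the structure the argument needs.

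Next I would read off an entanglement monotone across the cut $RA\mid B$ for the input and the claimed output. By the hypothesis of the lemma, $\ket{\Phi}=\tfrac{1}{\sqrt2}\bigl(\ket{0}^{(R)}\ket{0}^{(A)}+\ket{1}^{(R)}\ket{1}^{(A)}\bigr)\otimes\ket{\psi}^{(B)}$ is a product state across $RA\mid B$, hence separable, so every entanglement monotone for this cut vanishes on it; e.g. the distillable entanglement satisfies $E_D^{RA\mid B}(\ket{\Phi})=0$. The target state $\ket{\Phi}^{(RB)}=\tfrac{1}{\sqrt2}\bigl(\ket{0}^{(R)}\ket{0}^{(B)}+\ket{1}^{(R)}\ket{1}^{(B)}\bigr)$, however, is a two-qubit maximally entangled state between $R$ and $B$ (the $A$ systems having been discarded), which is entangled across $RA\mid B$; in particular $E_D^{RA\mid B}(\ket{\Phi}^{(RB)})=1$.

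Then I would close the argument by monotonicity: LOCC across $RA\mid B$ cannot turn a separable state into an entangled one, equivalently it cannot increase $E_D^{RA\mid B}$. If the extraction map existed it would carry a state with $E_D^{RA\mid B}=0$ to one with $E_D^{RA\mid B}=1$, a contradiction. Hence no LOCC map transforms $\ket{\Phi}$ into $\ket{\Phi}^{(RB)}$, which is the claim. I would also remark that the orthogonality of $\ket{\psi_0}^{(AB)}$ and $\ket{\psi_1}^{(AB)}$ enters only to make the two displayed decompositions of $\ket{\Phi}$ mutually consistent; it plays no role in the impossibility itself.

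The step I expect to require the most care is the first one: cleanly justifying that ``no operation on $R$'' together with ``LOCC among the $A$- and $B$-parties'' genuinely qualifies as bipartite LOCC across the cut $RA\mid B$, including the point that $\Lambda$ may discard the $A$-registers (so local dimensions change), which is still a legitimate local operation and hence does not compromise the monotonicity of $E_D^{RA\mid B}$. Once that identification is pinned down the remainder is routine, and indeed one could replace $E_D$ by any entanglement measure that is zero on separable states, positive on a maximally entangled pair, and non-increasing under LOCC.
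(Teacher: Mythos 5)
Your proof is correct: grouping $R$ with the $A$-parties, the input state is separable across the cut $RA\mid B$ while the target carries one ebit across that same cut, so monotonicity of distillable entanglement (or any LOCC monotone vanishing on separable states) rules out the conversion, and this works whether or not $R$ is a passive spectator since tripartite LOCC is contained in bipartite LOCC for any coarse-graining of the parties. The paper itself offers no proof of this lemma --- it imports it as Corollary~6 of Ref.~\cite{Yamasaki2019} --- but your separability-plus-monotonicity argument is exactly the style of reasoning the paper deploys for the analogous impossibility claims in the proof of Theorem~\ref{thm:gQSS}, so your route matches the paper's methodology and nothing is missing.
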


In the following, after introducing some notations, we will show Algorithm~\ref{alg2} indeed provides an LOCC map which performs extraction of single-qubit quantum information.

\begin{proof}[Proof of Theorem~\ref{thm:thm10}]
As shown in Lemma~\ref{lem:1}, to construct an LOCC extracting map $\mathcal{C}^{(S\to j)}$ to party $j$, it suffices to construct an LOCC map which transforms $\ket{\Phi}\bra{\Phi}^{(RS)}$ to $\ket{\Phi}\bra{\Phi}^{(Rj)}$ with its action on the reference system being the identity,
\begin{equation}
    \mathrm{id}^{(R)}\otimes\mathcal{C}^{(S\to j)}\ket{\Phi}\bra{\Phi}^{(RS)}=\ket{\Phi}\bra{\Phi}^{(Rj)}
\end{equation}
where $\ket{\Phi}^{(RS)}$ and $\ket{\Phi}^{(Rj)}$ are maximally entangled states defined in Eq.~(\ref{eq:RS}) and (\ref{eq:Rk}).
This $\ket{\Phi}^{(RS)}$ is a stabilizer state since $\ket{\Phi}^{(RS)}$ is a stabilizer state stabilized by the stabilizer generated by $I^{(R)}\otimes S$, $Z^{(R)}\otimes\tilde{Z}^{(S)}$ and $X^{(R)}\otimes\tilde{X}^{(S)}$, where $\tilde{X}^{(S)}(\tilde{Z}^{(S)})$ is the logical $X(Z)$ operator on $\mathcal{H}^{(S)}$.

According to Proposition~\ref{prop:2}, $\ket{\Phi}^{(RS)}$ is LC equivalent to some graph state $\ket{G}$,
that is,
\begin{equation}
\label{eq:initial}
    \ket{G}^{(R,1,\ldots,n)}=\left(U^{(R)}_R\otimes U^{(1)}_1\otimes\cdots\otimes U^{(n)}_n\right)\ket{\Phi}^{(RS)}
\end{equation}
where $U^{(k)}_k$ is a local Clifford map on qubit $k$.

    Now the problem of LOCC construction of $\mathcal{C}^{(S\to j)}$ is reduced to the transformation of a graph state $\ket{\Phi}^{(RS)}$ to another graph state $\ket{\Phi}^{(Rj)}$, which can be considered by using Proposition~\ref{prop:graph} about LPMs on graph states.
    The conversion procedure can be summarized as follows
(i) $Z$-measurements at parties other than the path from $R$ to $j$ on the graph and correction at parties adjacent to that party except $R$.
(ii) $X$-measurements at parties other than $R$ and $j$ in the path from $R$ to $j$ and corrections at parties adjacent to that party except for $R$
(iii) LC operation on the party $j$ to cancel the Clifford operation on the reference, which is intact.

If the vertices $R$ and $j$, which correspond to the reference system and party $j$ respectively, are not connected by a path on the graph, then $j$ does not satisfy the condition in Theorem~\ref{thm:thm10}, and LOCC extraction of quantum information is impossible according to Lemma~\ref{lem:cor6}.
Thus, if party $j$ satisfies the condition, $R$ and $j$ are connected by a path $P=(V_P,E_P)$ from vertex $R$ to vertex $j$.
Let $v_l\in V_P=\{v_l:l=1,2,\cdots,|V_P|\}$ denote the vertices of $P$, where $v_1=j$, $v_{|V_P|}=R$ and $\{v_{l},v_{l+1}\}\in E_P$ for each $l$ is an edge of $P$.
Let $A$ be the set of vertices adjacent to some vertex of $V_P$, which may include $V_P$ itself, i.e.,
\begin{equation}
    A\coloneqq\bigcup_{v_l\in V_P}N_{v_l},
\end{equation}
where $N_{v_l}$ is the set of adjacent vertices to $v_l$ as in Eq.~\eqref{eq:graph_state}.

In Algorithm~\ref{alg2}, given a state~\eqref{eq:initial}, the parties $k$ of $A$ perform $U^{(k)}$, and we obtain
\begin{equation}
\Big(\bigotimes_{k\in A} U^{(k)}\Big)\ket{\Phi}^{(RS)}=(U^{(R)}_R)^\dagger\otimes W^{(V\setminus A)}\ket{G}^{(R1...n)},
\end{equation}
where $W^{(V\setminus A)}=\bigotimes_{k\in V\setminus A}{U_k^{(k)}}^\dagger$ is the unitary acting on the parties corresponding to the parties not included in $A$.
Let $A'$ be the set of the parties of $A$ which does not correspond to a vertex of $V_P$, i.e., $A'=A\setminus V_P$.
Then, each party $k$ of $A'$ performs measurement in $\{\ket{0}^{(k)},\ket{1}^{(k)}\}$, and communicates the outcome of measurement to the parties to all the vertices of $V_P$ adjacent to vertex $k$ except for $v_{|V_P|}=R$.
Since qubit $R$ is a reference qubit, party $R$ is just a virtual one and cannot perform any operation.
Let $m_l$ denote the number of parties who communicate the outcome $\ket{1}$ to party $v_l$ for each $l$.
Then, according to Proposition~\ref{prop:graph}, we obtain
\begin{equation}
    \left(\bigotimes_{l=1}^{|V_P|}(Z^{(v_l)})^{m_l}\right) (U^{(R)}_R)^\dagger \ket{P},
\end{equation}
where vertices in the superscripts represent their corresponding parties.
Thus, each party $v_l$ except for $v_{|V_P|}=R$ performs $(Z^{(v_l)})^{m_l}$, and we obtain
\begin{equation}
    (U^{(R)}_R)^\dagger(Z^{(R)})^{m_{|V_P|}}\ket{P}.
\end{equation}
Next, we consider transforming $P$ by the local complementation shown in Fig.~\ref{graph:Xmeas} in Corollary~\ref{cor:linegraph}, which corresponds to measurement in $\{\ket{\pm}\}$ and LU operations.
Party $v_2$ performs the measurement in $\{\ket{\pm}^{(v_2)}\}$.
Then, according to Proposition~\ref{prop:graph}, if the outcome is $\ket{+}^{(v_2)}$, we obtain
\begin{equation}
    (U^{(R)}_R)^\dagger(Z^{(R)})^{m_{|V_P|}}\otimes Z^{(v_3)} \otimes \sqrt{iY}^{(v_1)} \ket{P},
\end{equation}
and if the outcome is $\ket{-}^{(v_2)}$, we obtain
\begin{equation}
    (U^{(R)}_R)^\dagger(Z^{(R)})^{m_{|V_P|}}\otimes \sqrt{-iY}^{(v_1)}\ket{P}.
\end{equation}
Therefore, if the outcome is $\ket{+}^{(v_2)}$, party $v_1$ and party $v_3$ perform $(\sqrt{iY}^{(v_1)})^{-1}$ and $Z^{(v_3)}$, respectively.
If the outcome is $\ket{-}^{(v_2)}$, party $v_1$ performs $(\sqrt{-iY}^{(v_1}))^{-1}$.
Then, we obtain
\begin{equation}
    (U^{(R)}_R)^\dagger(Z^{(R)})^{m_{|V_P|}}\ket{P'},
\end{equation}
where
\begin{equation}
    P'=\left(V_P\setminus \{v_2\}, (E_P\setminus\{\{v_1,v_2\},\{v_2,v_3\}\})\cup\{\{v_1,v_3\}\}\right).
\end{equation}
By recursively performing the above operation for $l=2,3,\cdots,|V_P|-2$, we obtain
\begin{equation}
    (U^{(R)}_R)^\dagger(Z^{(R)})^{m_{|V_P|}}\ket{P''},
\end{equation}
where
\begin{equation}
\begin{split}
    P''=(&\{v_1,v_{|V_P|-1},v_{|V_P|}\},\\
    &\{\{v_1,v_{|V_P|-1}\},\{v_{|V_P|-1},v_{|V_P|}\}\}).
\end{split}
\end{equation}
The case of $l=|V_P|-1$ is exceptional because we cannot perform a local unitary operation on the reference system.
Finally, party $v_{|V_P|-1}$ performs the measurement in $\{\ket{\pm}^{v_{|V_P|-1}}\}$, and, according to Proposition~\ref{prop:graph}, we obtain
\begin{align}
    (U^{(R)}_R)^\dagger(Z^{(R)})^{m_{|V_P|}+1}\otimes& \sqrt{iY}^{(v_1)} \ket{G'}\\\nonumber
    &(\text{if the outcome is $\ket{+}^{v_{|V_P|-1}}$}),\\
   (U^{(R)}_R)^\dagger(Z^{(R)})^{m_{|V_P|}}\otimes& \sqrt{-iY}^{(v_1)}\ket{G'}\\\nonumber
   &(\text{if the outcome is $\ket{-}^{v_{|V_P|-1}}$}),
\end{align}
where $G'$ is a connected graph with the two vertices $R$ and $v_1=j$.
Thus, by applying an appropriate LU operation on $\mathcal{H}^{(j)}$ depending on the outcome, we obtain $\ket{\Phi}^{(Rj)}$ in both cases of the outcomes.
\end{proof}

\subsection{How to calculate the graph state corresponding to the five-qubit code in Example~\ref{exa:5qcode}}
\label{app:Ex3}
We show how to calculate the graph state corresponding to the five-qubit code.
Let $S'$ be the stabilizer generated by $S\otimes I^{R}$, $\tilde{X}\otimes X^{R}$ and $\tilde{Z}\otimes Z^{R}$.
The logical $X$ and $Z$ are respectively represented by $\tilde{X}=X^{(1)}X^{(2)}X^{(3)}X^{(4)}X^{(5)}$ and $\tilde{Z}=Z^{(1)}Z^{(2)}Z^{(3)}Z^{(4)}Z^{(5)}$.
Logical $Z$ and $X$ operators on this stabilizer code can be calculated using a standard check matrix as mentioned in Chap.~4 in Ref.~\cite{gottesman1997stabilizer}. 

Then, $\ket{\Phi}^{(RS)}$ is the stabilizer state, and its stabilizer, $S'$, is generated by the following 6 generators
\begin{equation}
\label{eq:5qubitgen}
\begin{split}
    g_1&=X^{(1)}Z^{(2)}Z^{(3)}X^{(4)}I^{(5)}I^{(R)},\\
    g_2&=I^{(1)}X^{(2)}Z^{(3)}Z^{(4)}X^{(5)}I^{(R)},\\
    g_3&=X^{(1)}I^{(2)}X^{(3)}Z^{(4)}Z^{(5)}I^{(R)},\\
    g_4&=Z^{(1)}X^{(2)}I^{(3)}X^{(4)}Z^{(5)}I^{(R)},\\
    g_5&=X^{(1)}X^{(2)}X^{(3)}X^{(4)}X^{(5)}X^{(R)},\\
    g_6&=Z^{(1)}Z^{(2)}Z^{(3)}Z^{(4)}Z^{(5)}Z^{(R)}.
\end{split}
\end{equation}
We can transform this set of generators to a (nonunique) canonical one with the transformation as follows~\cite{PhysRevA.69.022316}.
The goal of the transformation is to obtain a set of generators where the Pauli operator acting on the $j$-th qubit of generator $g_j$ is $X$, and the others are $Z$ or $I$.
For this transformation, we can multiply a generator by another generator $g_j\gets g_jg_k~(k\neq j)$, and apply the local Clifford operation such as the Hadamard gate $g_j\gets H^{(1)}g_j H^{(1)}~(\text{for all}~j)$.

We first look at the Pauli operators acting on qubit 1 of the generator $g_1$, which is $X$.
In addition, we arbitrarily choose a Pauli operator other than $X$, here, $Z$.
Then, for the other generators, if the Pauli operator on qubit 1 of the generator $g_j$ is $X$ or $Y$, we update the generator to $g_j\gets g_1g_j$, and if the Pauli operator on qubit 1 of the generator $g_j$ is same as the Pauli operator which we chose, then $g_j$ remains same.
With the above transformation, the generators are transformed to
\begin{equation}
\label{eq:5qubitgen2}
\begin{split}
    g_1&=X^{(1)}Z^{(2)}Z^{(3)}X^{(4)}I^{(5)}I^{(R)},\\
    g_2&=I^{(1)}X^{(2)}Z^{(3)}Z^{(4)}X^{(5)}I^{(R)},\\
    g_3&=I^{(1)}Z^{(2)}Y^{(3)}Y^{(4)}Z^{(5)}I^{(R)},\\
    g_4&=Z^{(1)}X^{(2)}I^{(3)}X^{(4)}Z^{(5)}I^{(R)},\\
    g_5&=I^{(1)}Y^{(2)}Y^{(3)}I^{(4)}X^{(5)}X^{(R)},\\
    g_6&=Z^{(1)}Z^{(2)}Z^{(3)}Z^{(4)}Z^{(5)}Z^{(R)}.
\end{split}
\end{equation}
Next, we repeatedly perform the above transformation looking at qubit $k=2,\cdots,5,6 (=R)$ of the generator $g_k$.
Then, the generators are transformed to
\begin{equation}
\label{eq:5qubitgen3}
\begin{split}
    g_1&=X^{(1)}Z^{(2)}Z^{(3)}X^{(4)}I^{(5)}I^{(R)},\\
    g_2&=Z^{(1)}X^{(2)}Z^{(3)}I^{(4)}I^{(5)}X^{(R)},\\
    g_3&=Z^{(1)}Z^{(2)}X^{(3)}I^{(4)}X^{(5)}I^{(R)},\\
    g_4&=Z^{(1)}I^{(2)}I^{(3)}Z^{(4)}X^{(5)}I^{(R)},\\
    g_5&=I^{(1)}I^{(2)}Z^{(3)}X^{(4)}Z^{(5)}X^{(R)},\\
    g_6&=-I^{(1)}Z^{(2)}I^{(3)}X^{(4)}X^{(5)}Z^{(R)}.
\end{split}
\end{equation}

We can show from this form of generators that $\ket{\Phi}^{(RS)}$ is LC equivalent to the graph state $\ket{G_{\text{five-qubit}}}$ whose graph is given in Fig.~\ref{graph:5qubit} as
\begin{equation}
    \ket{\Phi}^{(RS)}=H^{(R)}\otimes H^{(4)}\otimes H^{(5)}\ket{G}.
\end{equation}
This is because $H^{(R)}H^{(4)}H^{(5)}\ket{\Phi}^{(RS)}$ is stabilized by the stabilizer generated by the following generators, with $g^\prime_j=H^{(R)}H^{(4)}H^{(5)}g_jH^{(R)}H^{(4)}H^{(5)}$,
\begin{equation}
\label{eq:5qubitgen4}
\begin{split}
    g_1^\prime&=X^{(1)}Z^{(2)}Z^{(3)}Z^{(4)}I^{(5)}I^{(R)},\\
    g_2^\prime&=Z^{(1)}X^{(2)}Z^{(3)}I^{(4)}I^{(5)}Z^{(R)},\\
    g_3^\prime&=Z^{(1)}Z^{(2)}X^{(3)}I^{(4)}Z^{(5)}I^{(R)},\\
    g_4^\prime&=Z^{(1)}I^{(2)}I^{(3)}X^{(4)}Z^{(5)}I^{(R)},\\
    g_5^\prime&=I^{(1)}I^{(2)}Z^{(3)}Z^{(4)}X^{(5)}Z^{(R)},\\
    g_6^\prime&=-I^{(1)}Z^{(2)}I^{(3)}Z^{(4)}Z^{(5)}X^{(R)},
\end{split}
\end{equation}
According to Proposition 2 in Ref.~\cite{Hein2006}, the graph state $|{G_\text{five-qubit}}\rangle$ is the stabilizer state of the stabilizer generated by $g_j^\prime=X^{(j)}\otimes\bigotimes_{k\in N_j}Z^{(k)}$.

\subsection{Proof of Theorem~\ref{thm:gQSS}}
\label{app:thm7}

\begin{proof}[Proof of Theorem~\ref{thm:gQSS}]
We consider whether a non-cooperative party can block the extraction to party $j$ by performing a measurement in a wrong basis, communicating a wrong measurement result, or not communicating the measurement result.
As discussed in the main text, we show the following statements.
\begin{enumerate}
    \item If any one of the parties corresponding to a vertex in $V_P$ does not cooperate, extraction of quantum information to party $j$ is impossible.
    \item If all the parties corresponding to the vertices in one of the subgraphs ${\{G_{m,l}\}}_{m,l}$ do not cooperate, the extraction of quantum information to party $j$ is impossible.
\end{enumerate}
To show that LOCC extraction of quantum information is impossible in each case, we consider the reduced state of $\ket{G}\bra{G}$ on the system corresponding to the cooperative parties.
We show that the reduced state cannot be transformed to  $\ket{\Phi}\bra{\Phi}^{(Rj)}$ by LOCC using the monotonicity of an entanglement measure under LOCC\@.
Since the reduced state can be considered to be a classical probability mixture of the state after the projective measurement in an orthonormal basis of the system that is traced out, we can use Proposition~\ref{prop:graph} about LPMs on the graph states.

\begin{figure}[h]
    \centering
    \includegraphics[keepaspectratio, scale=0.12, angle=0]{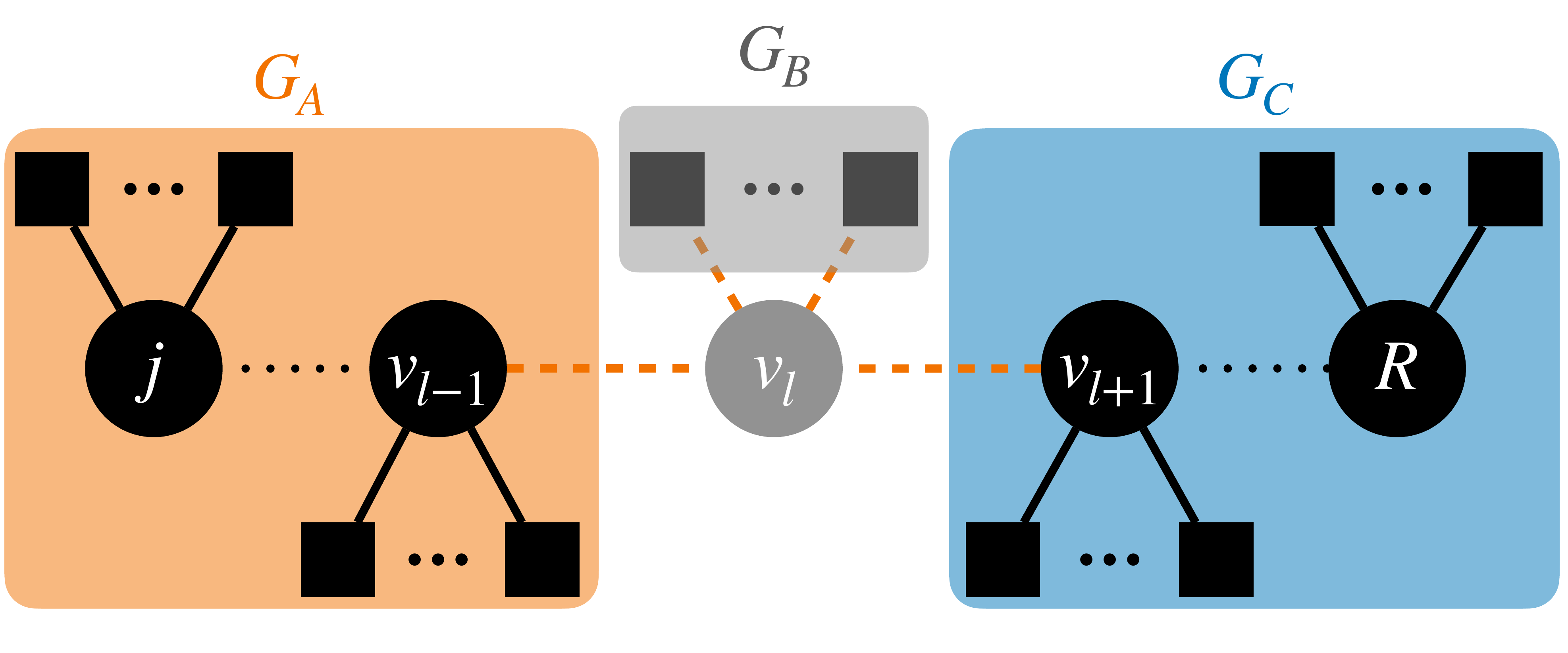}
    \caption{The dotted lines represent the eliminated vertex and edges after the measurement in $\{\ket{0},\ket{1}\}$ on qubit ${(v_l)}$. The graph after the measurement can be divided into three subgraphs $G_A,G_B$, and $G_C$ shaded in orange, gray, and blue, respectively. Since $G_A,G_B$, and $G_C$ are not connected, the state after the measurement has no entanglement between $R$ and $j$.}
    \label{graph:G2}
\end{figure}
\begin{figure}[h]
\centering
    \includegraphics[keepaspectratio, scale=0.12, angle=0]{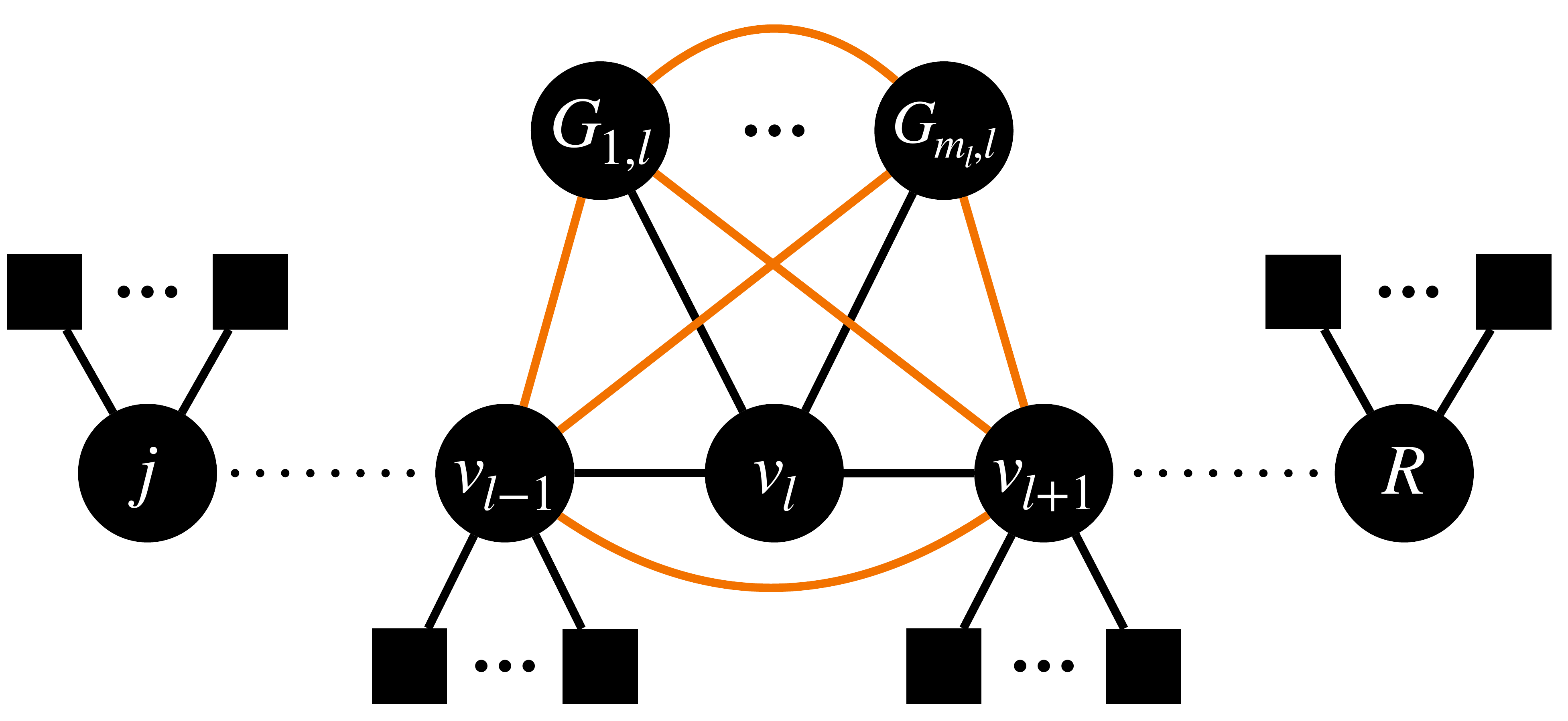}
    \caption{The graph obtained by the local complementation on $v_l$ of the graph shown in Fig.~\ref{graph:G}. The orange lines represent the edges added by the local complementation on $v_l$.}
    \label{graph:G4-a}
\end{figure}

\begin{figure}[h]
    \centering
    \includegraphics[keepaspectratio, scale=0.12, angle=0]{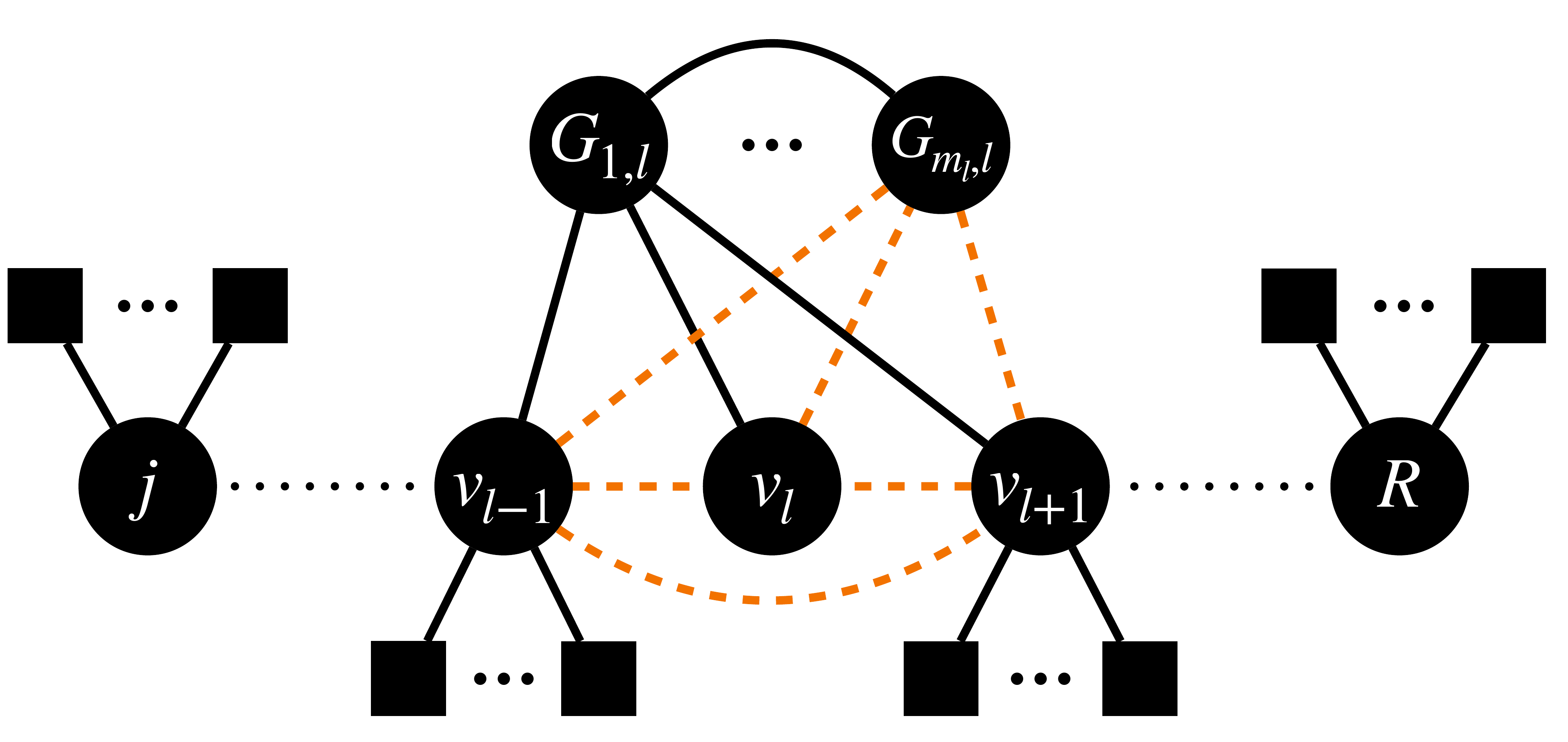}
    \caption{The graph obtained by the local complementation on $G_{1,l}$ of the graph shown in Fig.~\ref{graph:G4-a}. The orange dashed lines represent the edges eliminated by the local complementation on $G_{1,l}$.}
    \label{graph:G4-b}
\end{figure}
\begin{figure}[h]
    \includegraphics[keepaspectratio, scale=0.11, angle=0]{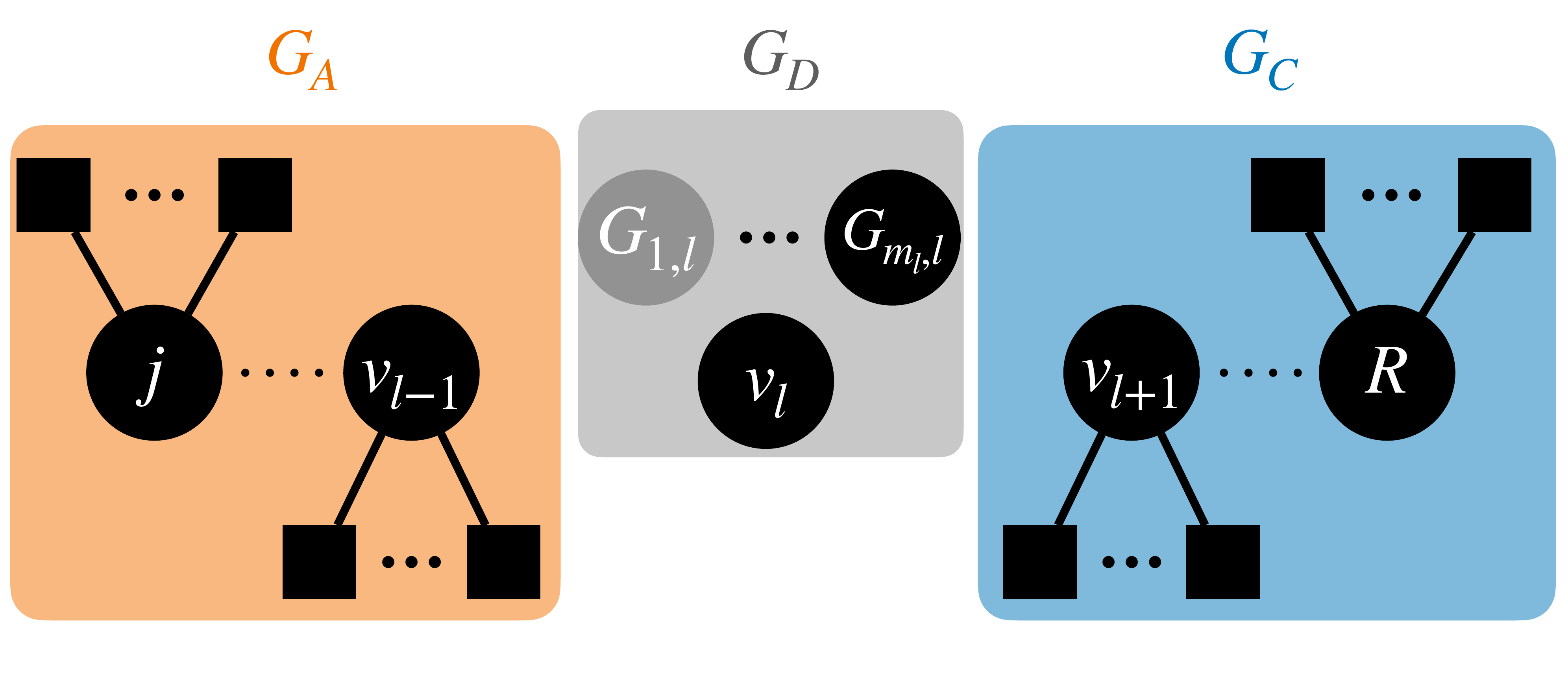}
    \caption{The graph obtained by eliminating $G_{1,l}$ and the local complementation on $v_l$. This graph corresponds to the state after the measurement in $\{\ket{\pm}\}$ on $G_{1,l}$. This graph after the measurement in $\{\ket{\pm}^{(G_{1,l})}\}$ is obtained by the operation written in Proposition~\ref{prop:graph} as follows, (1) Local complementation at $P_m$ shown in Fig.~\ref{graph:G4-a}, (2) Local complementation at $G_{m}^{(1)}$ shown in Fig.~\ref{graph:G4-b}, (3) Eliminating $G_{m}^{(1)}$, and (4) Local complementation at $P_m$ shown in Fig.~\ref{graph:G4-c}. The graph after the measurement can be divided into three subgraphs $G_A,G_D$, and $G_C$ shaded in orange, gray, and blue, respectively. Since $G_A,G_D$ and $G_C$ are not connected with each other, the state after the measurement has no entanglement between $R$ and $j$.}
    \label{graph:G4-c}
\end{figure}
Let us consider the first case where party $v_k\in V_P$ does not cooperate.
Here, as said above, instead of tracing out $\mathcal{H}^{(v_k)}$, we consider the measurement on $\mathcal{H}^{(v_k)}$ since it gives the same reduced state on the composite system of the parties other than $\mathcal{H}^{(v_k)}$.
After the measurement in $\{\ket{0}^{(v_k)},\ket{1}^{(v_k)}\}$ on qubit $v_k$, if the outcome is $\ket{0}^{(v_k)}$, the state of the system $\bigotimes_{v\in V\setminus\{v_k\}}\mathcal{H}^{(v)}$ is represented by $\ket{G-v_k}\bra{G-v_k}$, and if the outcome is $\ket{1}^{(v_k)}$, the state is represented by $U_{v_k,Z,-}\ket{G-v_k}\bra{G-v_k}U_{v_k,Z,-}^\dagger$, where 
\begin{equation}
    U_{v_k,Z,-}=Z^{(u_{k,1})}\cdots Z^{(u_{k,m_l})}Z^{(v_{k-1})}Z^{(v_{k+1})}.
\end{equation}
Since the graph corresponding to the state after the measurement in $\{\ket{0}^{(v_k)},\ket{1}^{(v_k)}\}$ on qubit $v_k$ is represented by Fig.~\ref{graph:G2} according to Proposition~\ref{prop:graph}, we obtain
\begin{equation}
    \ket{G-v_k}=|G_A\rangle \otimes |G_B\rangle \otimes |G_C\rangle,
\end{equation}
where $G_A$, $G_B$, and $G_C$ are the graphs in Fig.~\ref{graph:G2} that are not connected with each other.
Thus the state after the measurement in $\{\ket{0}^{(v_k)},\ket{1}^{(v_k)}\}$ is represented by
\begin{equation}
\begin{split}
    \rho=\frac{1}{2}\sum_{a=0,1} \{&((Z^{(v_{k-1})})^a\ket{G_A}\bra{G_A}(Z^{(v_{k-1})})^a)\\
    &\otimes({(U_{v_k,Z,-}^\prime)}^a\ket{G_B}\bra{G_B}({U_{v_k,Z,-}^\prime}^\dagger)^a)\\
    &\otimes((Z^{(v_{k+1})})^a\ket{G_C}\bra{G_C}(Z^{(v_{k+1})})^a)\},
\end{split}
\end{equation}
where $U_{v_k,Z,-}^\prime\coloneqq Z^{(u_{k,1})}\cdots Z^{(u_{k,m_k})}$.

In the following, we use the distillable entanglement $E_D(\cdot)$ as an entanglement measure to quantify the bipartite entanglement. The distillable entanglement in the bipartite system consisting of the subsystems $A$ and $B$ is defined as~\cite{horodecki2009quantum}
\begin{equation}
\begin{split}
    E_{D}(\rho)=&\sup \left\{r:\right.\\
    &\left.\lim _{n \rightarrow \infty}\left[\inf || \Lambda\left((\rho^{(AB)})^{\otimes n}\right)-\Phi_{2^{r n}}^{ (AB)} \|_{1}\right]=0\right\},
    \end{split}
\end{equation}
where $\Phi_{2^{r n}}^{(AB)}=(\ket{\Phi}\bra{\Phi}^{(AB)})^{\otimes rn}$ and $||\cdot||_1$ is the trace norm.
Let us consider the system as a bipartite system composed of the system corresponding to $G_C$ and the composite system of two systems corresponding to $G_A$ and $G_B$ respectively.
With respect to this bipartition, we obtain $E_{D}(\rho)=0$ since $\rho$ is separable. 

According to Lemma~\ref{lem:1}, if extraction of quantum information is possible, we can transform $\rho$ into $\ket{\Phi}\bra{\Phi}^{(Rj)}$ by LOCC\@. 
Since $E_{D}(\rho)=0$, $E_D(\ket{\Phi}\bra{\Phi}^{(Rj)})=1$ and entanglement measure is not increased by LOCC, $\rho$ cannot be transformed into $\ket{\Phi}\bra{\Phi}^{(Rj)}$ by LOCC\@.
Therefore, extraction of quantum information is impossible without the cooperation of the party corresponding to $v_k$.

Secondly, we consider the case where none of the parties corresponding to the vertices in a subgraph $G_{k,l}$ cooperates. 
Let us define the orthonormal states of the system consisting of qubits in $V_{k,l}$,
\begin{align}
\label{eq:Vkl0}
\ket{0}^{(V_{k,l})}&=\ket{0}^{(u_{k,l})}\otimes|G_{k,l}-u_{k,l}\rangle,\\\label{eq:Vkl1}
\ket{1}^{(V_{k,l})}&=\ket{1}^{(u_{k,l})}\otimes Z^{\left(V_{k,l}\setminus\{u_{k,l}\}\right)}|G_{k,l}-u_{k,l}\rangle.
\end{align}
Since the domain of the reduced state of $\ket{G}\bra{G}$ on the system corresponding to $V_{k,l}$ is spanned by the two states given by Eqs.~(\ref{eq:Vkl0}) and (\ref{eq:Vkl1}), we can regard the system corresponding to $V_{k,l}$ as a qubit. 
Then, $\ket{G}$ can be regarded as the graph state corresponding to the graph in Fig.~\ref{graph:G} by regarding the boxes as vertices.

Without loss of generality, let us consider the case of tracing out the system consisting of $V_{k,1}$. 
In the same way as in the first case, we consider the measurement in $\{\ket{\pm}^{(V_{k,1})}=(\ket{0}^{(V_{k,1})}\pm\ket{1}^{(V_{k,1})})/\sqrt{2}\}$ on the qubit corresponding to $V_{k,1}$ instead of tracing out the system $V_{k,1}$. 
The measurement in $\{\ket{\pm}^{(V_{k,1})}\}$ on the qubit corresponding to $V_{k,1}$ can be represented by the transformation of graphs shown in Fig.~\ref{graph:G4-a},~\ref{graph:G4-b}, and~\ref{graph:G4-c}.
The state obtained by the measurement in $\{\ket{\pm}^{(V_{k,1})}\}$ corresponds to the graph shown in Fig.~\ref{graph:G4-c}.
Then, in the same way as in the first case, the state after the measurement $\rho$ is represented by
\begin{widetext}
\begin{equation}
    \begin{split}
    \rho&=\frac{1}{2}\sum_{a=0,1} \Big\{\left[(Z^{(v_{k-1})})^a\ket{G_A}\bra{G_A}(Z^{(v_{k-1})})^a\right]\\
    &\otimes\left[\sqrt{(-1)^aiY^{(v_k)}}{(U_{v_k,Z,-}^{\prime\prime})}^a\ket{G_D}\bra{G_D}({U_{v_k,Z,-}^{\prime\prime}}^\dagger)^a\sqrt{(-1)^aiY^{(v_k)}}^\dagger\right]\otimes\left[(Z^{(v_{k+1})})^a\ket{G_C}\bra{G_C}(Z^{(v_{k+1})})^a\right]\Big\},
\end{split}
\end{equation}
\end{widetext}
where ${U_{v_k,Z,-}^{\prime\prime}}\coloneqq Z^{(u_{k,2})}\cdots Z^{(u_{k,m_k})}$.
We can see that $\rho$ is separable and hence $E_D(\rho)=0$.
Thus, $\rho$ cannot be transformed into $\ket{\Phi}^{(Rj)}\bra{\Phi}^{(Rj)}$ by LOCC\@.
In the same way as the $v_k$, extraction of quantum information to $j$ is impossible if all the parties in $V_{k,1}$ do not cooperate, which completes the proof.
\end{proof}

\section{Two-qubit extraction is not simple}
\label{app:C}

Extending the analysis of LOCC extraction of single-qubit information to that of multi-qubit quantum information is not straightforward since the multi-qubit case involves an additional network untying problem.  To exhibit this property, let us consider the following graph shown on the left-hand side in Fig.~\ref{graph:twoqbit}, which represents the maximally entangled state between the two-qubit reference and the stabilizer code with two logical qubits.

Consider extracting information into two qubits, $v_3$ and $v_4$. As explained in Lemma~\ref{lem:1}, LOCC extraction to $v_3$ and $v_4$ is equivalent to obtain the maximally entangled state between ${R_1, R_2}$ and ${v_3, v_4}$, which holds two-ebits of entanglement. In the single-qubit case, we only need to check the connectivity between the reference and the party to be extracted. As $R_1$, $R_2$, $v_3$, and $v_4$ are connected, LOCC extraction seems to be possible from the analysis of only the connectivity of the graph. However, two-qubit information cannot be extracted by LOCC in this case because the graph state is actually not maximally entangled for the bipartition drawn in Fig.~\ref{graph:twoqbit}, since this graph state is LC equivalent to the graph state on the right-hand side in Fig.~\ref{graph:twoqbit}, which can only hold one-ebit of entanglement for this bipartition. As LOCC cannot generate entanglement, we cannot obtain the maximally entangled state between ${R_1, R_2}$ and ${v_3, v_4}$ from the initial graph state.

In general, to determine if LOCC extraction is possible and to give the protocol for LOCC extraction, it is necessary to check all the exponential numbers of graphs that can be transformed by local complementation. Finding a class of stabilizer code with multiple logical qubits for efficient LOCC extraction may be possible, but further investigation should be a future task.
 \begin{figure}[b]
    \centering
    \includegraphics[keepaspectratio, scale=0.12, angle=0]{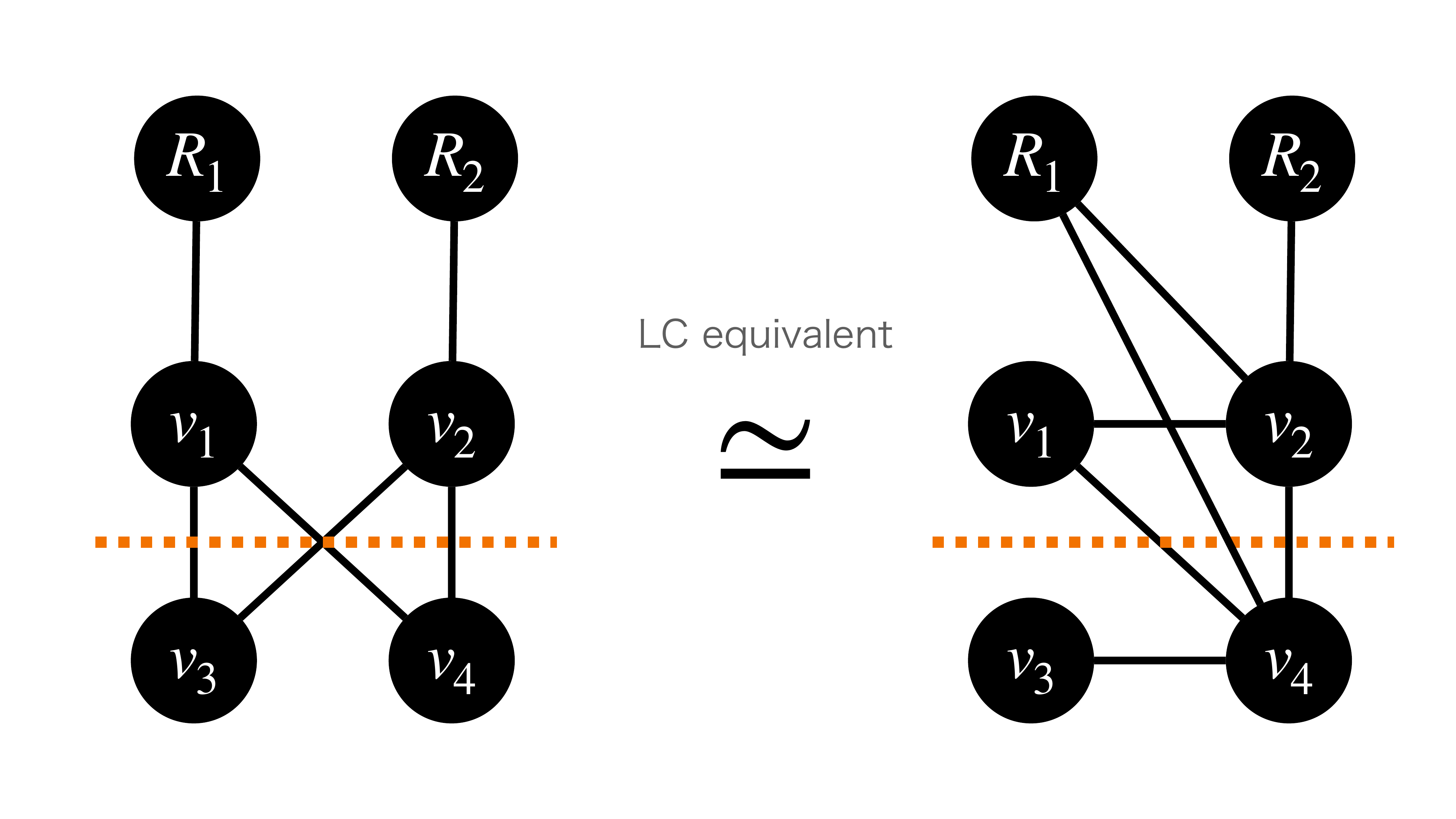}
    \caption{(Left) The graph state corresponding to the maximally entangled state between two reference qubits and a stabilizer code that encodes two qubits. (Right) The graph state that is LC equivalent to the graph state shown on the left-hand side. We can see from the right that the state cannot be maximally entangled for the bipartition drawn by the orange dotted lines. Therefore, two-qubit information cannot be extracted to $v_3$ and $v_4$ by LOCC.}
    \label{graph:twoqbit}
\end{figure}

\bibliography{apssamp}
\end{document}